%\documentclass[a4paper,UKenglish,cleveref, autoref]{lipics-v2021}
%\pdfoutput=1
\documentclass[11pt]{article}
%This is a template for producing LIPIcs articles. 
%See lipics-v2021-authors-guidelines.pdf for further information.
%for A4 paper format use option "a4paper", for US-letter use option "letterpaper"
%for british hyphenation rules use option "UKenglish", for american hyphenation rules use option "USenglish"
%for section-numbered lemmas etc., use "numberwithinsect"
%for enabling cleveref support, use "cleveref"
%for enabling autoref support, use "autoref"
%for anonymousing the authors (e.g. for double-blind review), add "anonymous"
%for enabling thm-restate support, use "thm-restate"
%for enabling a two-column layout for the author/affilation part (only applicable for > 6 authors), use "authorcolumns"
%for producing a PDF according the PDF/A standard, add "pdfa"

%\graphicspath{{./graphics/}}%helpful if your graphic files are in another directory

\bibliographystyle{plainurl}% the mandatory bibstyle

\usepackage{amsmath,amssymb,amsthm,mathtools,comment}
\usepackage{enumerate}
\usepackage{float}
\usepackage{graphicx}
\usepackage{todonotes}
\usepackage{complexity}
\usepackage{hyperref}
\usepackage{cleveref}
\usepackage{xspace}
\usepackage{authblk}
%\usepackage[a4paper, margin=1in]{geometry}
%\graphicspath{ {figures/} }
\usepackage{tikz}
\usetikzlibrary{automata,positioning, decorations.pathreplacing, arrows, decorations.markings}

\usepackage{stackengine}
\usepackage[title]{appendix}
\stackMath

\theoremstyle{definition}
\newtheorem{lemma}{Lemma}
\newtheorem{theorem}{Theorem}
\newtheorem{proposition}{Proposition}

\newtheorem{remark}{Remark}

\newcommand{\N}{\mathbb{N}}

\newcommand{\T}{\mathcal{T}}
\renewcommand{\tilde}{\widetilde}

\title{Revisiting Tree Canonization using polynomials}

\newclass{\ReachUL}{ReachUL}
\newclass{\coUL}{coUL}

\newlang{\NZCL}{NonZeroCircL}
\newlang{\NZCNC}{NonZeroCircNC}

\newclass{\Log}{L}
\newclass{\ACz}{AC^0}
\newclass{\TCz}{TC^0}
\newclass{\ACo}{AC^1}
\newclass{\ACzt}{AC^0[\oplus]}
\newclass{\FOar}{FO(\le,+,\times)}
\newclass{\FOpar}{FO[\oplus](\le,+,\times)}
\newclass{\DynACz}{DynAC^0}
\newclass{\DynTCz}{DynTC^0}
\newclass{\DynACzt}{DynAC^0[\oplus]}
\renewclass{\DynFO}{DynFO}
\newclass{\DynFOar}{DynFO(\le,+,\times)}
\newclass{\DynFOLL}{DynFOLL}
\newclass{\DynFOp}{DynFO[\oplus]}
\newclass{\DynFOpar}{DynFO[\oplus](\le,+,\times)}
\newclass{\DynFOmar}{DynFO[MAJ](\le,+,\times)}

\newclass{\LogCFL}{LogCFL}

\newlang{\PM}{PM}
\newlang{\BPM}{BPM}
\newlang{\PMD}{PMDecision}
\newlang{\BPMD}{BPMDecision}
\newlang{\PMS}{PMSearch}
\newlang{\BPMS}{BPMSearch}
\newlang{\BMWPMS}{MinWtBPMSearch}
\newlang{\MCM}{MM}
\newlang{\BMCM}{BMM}
\newlang{\BMCMD}{BMMDecision}
\newlang{\BMCMS}{BMMSearch}
\newlang{\MCMSz}{MMSize}
\newlang{\BMCMSz}{BMMSize}
\newlang{\MWMCM}{MinWtMM}
\newlang{\BMWMCM}{MinWtBMM}
\newlang{\BMWMCMS}{MinWtBMMSearch}
\newlang{\Reach}{Reach}
\newlang{\Dist}{Distance}
\newlang{\Rank}{Rank}

%\nolinenumbers %uncomment to disable line numbering

%\hideLIPIcs  %uncomment to remove references to LIPIcs series (logo, DOI, ...), e.g. when preparing a pre-final version to be uploaded to arXiv or another public repository

%Editor-only macros:: begin (do not touch as author)%%%%%%%%%%%%%%%%%%%%%%%%%%%%%%%%%%
%\EventEditors{Miko{\l}aj Boja\'{n}czyk, Emanuela Merelli, and David P. Woodruff}
%\EventNoEds{3}
%\EventLongTitle{49th International Colloquium on Automata, Languages, and Programming (ICALP 2022)}
%\EventShortTitle{ICALP 2022}
%\EventAcronym{ICALP}
%\EventYear{2022}
%\EventDate{July 4--8, 2022}
%\EventLocation{Paris, France}
%\EventLogo{}
%\SeriesVolume{229}
%\ArticleNo{124}
%%%%%%%%%%%%%%%%%%%%%%%%%%%%%%%%%%%%%%%%%%%%%%%%%%%%%%

\begin{document}
\title{Revisiting Tree Canonization using polynomials}
\author[1,2]{V. Arvind\thanks{arvind@imsc.res.in}}
\author[2,3]{Samir Datta\thanks{sdatta@cmi.ac.in}}
\author[4]{Salman Faris\thanks{salmanfaris2048@gmail.com}}
\author[2]{Asif Khan\thanks{asifkhan@cmi.ac.in}}
\affil[1]{Institute of Mathematical Sciences (HBNI), Chennai, India}
\affil[2]{Chennai Mathematical Institute, Chennai, India} 
\affil[3]{UMI ReLaX}
\affil[4]{BITS Pilani, Hyderabad, India}
\date{}	
\maketitle
\begin{abstract}
Graph Isomorphism (GI) is a fundamental algorithmic problem. Amongst graph classes for which the computational complexity of GI has been resolved, trees are arguably the most fundamental. Tree Isomorphism is complete for deterministic logspace, a tiny subclass of polynomial time, by Lindell's result.  Over three decades ago, he devised a deterministic logspace algorithm that computes a string which is a canon for the input tree -- two trees are isomorphic precisely when their canons are identical.

Inspired by Miller-Reif's reduction of Tree Isomorphism to Polynomial Identity Testing, we present a new logspace algorithm for tree canonization fundamentally different from Lindell's algorithm. Our algorithm computes a univariate polynomial as canon for an input tree, based on the classical Eisenstein's criterion for the irreducibility of univariate polynomials. This can be implemented
in logspace by invoking the well known Buss et al. algorithm for arithmetic
formula evaluation. However, we have included in the appendix a simpler self-contained proof showing that arithmetic formula evaluation is in
logspace.

This algorithm is conceptually very simple, avoiding the delicate case analysis and complex recursion that constitute the core of Lindell's algorithm. We illustrate the adaptability of our algorithm by extending it to a couple of other classes of graphs.
\end{abstract}

\section{Introduction}
Graph Isomorphism is a classical and enigmatic problem in computer
science. On the one hand, no polynomial-time algorithm is known for
the problem (the current best being Babai's quasipolynomial time
algorithm \cite{Babai16}). On the other hand, there is not even a
$\Ptime$-hardness result known (the best hardness we know is
$\DET$-hardness by Tor\'an \cite{Toran}).

There are graph classes where the complexity gap between upper and
lower bounds has been bridged. Trees \cite{Lindell}, planar graphs
\cite{DLNTW}, interval graphs \cite{KKLV11}, and bounded treewidth
graphs \cite{ES} are well-known graph classes with matching lower and
upper bounds of logspace (the complexity class $\Log$). The logspace
algorithms for these graph classes crucially use Lindell's logspace
tree canonization algorithm \cite{Lindell} as subroutine. Here by
canonization we mean given a graph $G$ from a target class (such as
trees) create a string $\tau(G)$ such that two graphs from the target
graph class are isomorphic if and only if the corresponding strings
are identical.
%
%Lindell in an important paper \cite{Lindell} showed how to compute a canon 
%$\tau(T)$ 
%in $\Log$. 
The algorithm works by an intricate (and clever!) recursion.

%It works by computing automorphisms amongst subtrees of the original
%tree.

%\textbf{WHAT DOES AUTOMORPHISMS AMONGST SUBTREES MEAN? IS IT ISMORPHISMS?}

An entirely different algorithm for tree isomorphism\footnote{We
  recall, in the sequential setting, that the Aho-Hopcroft-Ullman
  algorithms text \cite{AHU} presents a linear-time tree isomorphism
  algorithm.} is Miller and Reif's simple parallel algorithm
\cite{MR}. It works by reducing the problem to polynomial formula
identity testing. They obtain an arithmetic formula $\Phi_T$ for a
given rooted tree $T$ that computes a $d$-variate multivariate
polynomial $p_T$, where $d$ is the depth of $T$. Rooted trees $T$ and
$T'$ and isomorphic if and only if $p_T$ and $p_{T'}$ are identical
polynomials.
%The polynomial $p_T$ has number of variables equal to the depth
%of the tree and yields a multivariate polynomial which is identical to
%the polynomial from another tree iff the trees are
%isomorphic\footnote{In general, if two polynomials are identical does
%  not imply that the formulas computing them are also identical in
%  structure.}
Plugging in small random values for variables in the two arithmetic
formulas yields distinct values with high probability if the trees are
not isomorphic (by the Polynomial Identity Lemma \cite{DeMilloLipton,
  Schwartz, Zippel}). The proof crucially uses the recursive
construction of irreducible multivariate polynomials for each subtrees
of the tree rooted at an arbitrary vertex. In a sense, the
Miller-Reif algorithm \cite{MR} trades complexity and determinism for
simplicity over Lindell's algorithm.

In this note we show that we need not forsake either complexity or
determinism in order to achieve simplicity for tree canonization.  To
wit, we obtain a new deterministic logspace algorithm for tree
canonization based on the Miller-Reif approach. The main idea is to
replace multiple variables by a single variable while preserving the
irreducibility of the univariate polynomial corresponding to the
subtree rooted at a vertex. At the same time the degree of the
polynomial remains upper bounded by the size of the tree. Thus the
univariate polynomial itself forms a canon for the tree. Since the
degree is polynomially bounded and we can explicitly compute the
polynomial using arithmetic formula evaluation \cite{BCGR92,HAB},
where the list of coefficients can be interpreted as a canon. This is
in contrast to the Miller-Reif multivariate polynomial $p_T$
corresponding to a tree $T$ which would require too many coefficients
to serve as a canon.

%approach of
%Miller and Reif that required too many coefficients to serve as a
%Logspace canon, necessitating a randomized substitution strategy.

One the one hand, our approach can be viewed as a complete
derandomization of the Miller-Reif approach and on the other, it is
conceptually very simple with the nitty gritty details of polynomial
evaluation being absorbed in the algorithm to evaluate arithmetic
formulas \cite{BCGR92} and, if desired, efficient univariate
polynomial interpolation \cite{HAB}. This last is required to
interpolate the tree canon polynomial from its
evaluations. Alternatively, evaluations of the polynomial at degree
plus one many values itself serve as a canon.

Our approach yields a new proof of Lindell's result that is arguably
simpler and more conceptual. It is also adaptable to other tree-like
graph classes. We illustrate this with labelled trees, of which
block-trees are a concrete example, and also $k$-trees which are a
special case of treewidth $k$ graphs.

\section{Preliminaries}
$\Log$ is the class of all languages that are decidable by Turing machines with read only input tape and logarithmically bounded in the input size work tape. A function is said to be computable by a logspace Turing machine that takes the function input and gives its output on a separate write only tape. We say that such a function is computable in $\Log$. Functions computable in $\Log$ are closed under function composition. That is, if $f$ and $g$ are two functions ($f,g:\Sigma^*\to\Sigma^*$, $\Sigma$ is the input alphabet) that are computable in $\Log$ then $f\circ g$ is also computable in $f$ (see~\cite{LM73}). Our canonization algorithms will compose constantly many functions computable in $\Log$.

\paragraph*{Graphs and connectivity}
A graph $G$ is \textit{connected} if there is path between every pair of distinct vertices in $G$. An acyclic connected graph is a \textit{tree}. A \textit{rooted tree}, is a tree with a specified vertex called its root. Let $u$ and $v$ be vertices of a rooted tree $T$ with root $r$. Then $u$ is an \textit{ancestor} of $v$ if it lies on the path from $v$ to $r$. Equivalently, $v$ is a \textit{descendant} of $u$. In $T$, vertex $u$ is a \textit{child} of vertex $v$ if $v$ is an ancestor as well as neighbour of $u$, and $v$ is \textit{the parent} of $u$, denoted $v=\text{parent}(u)$. A vertex with no descendants is a \textit{leaf}.
   
A vertex $v$ in a connected graph $G$, is a \textit{cut vertex} if removing it makes $G$ disconnected. A graph without cut vertices is \textit{biconnected}. Maximal biconnected subgraphs of a connected graph are its \textit{biconnected components} or \textit{blocks}. The \textit{block-cut tree} of a connected graph $G$, is a tree defined from $G$ as follows. There are nodes in the block-cut tree corresponding to cut vertices of $G$ (cut nodes) and biconnected components of $G$ (block nodes). In the tree, a cut node is adjacent to a block node if and only if the corresponding cut vertex in $G$ is in the corresponding block of $G$. A \textit{block graph} is a connected graph whose biconnected components are all cliques.

A \textit{clique-sum} of two graphs, is a graph obtained after combining them by identifying the vertices in two equal-sized cliques in the two graphs. It is $k$-clique-sum if the the clique size is $k$.

%A graph, along with a function that assigns colour to each vertex of the graph %is called a \textit{coloured graph}.
A \textit{coloured graph} is a triple $(V,E,\text{col})$, where $(V,E)$ is an undirected graph and $\text{col}:V\to\N$ is a vertex colouring function.
\paragraph*{Graph isomorphism and canonization}
Graphs $G$ and $H$ are \textit{isomorphic} if there is a bijection $\phi:V(G)\to V(H)$ such that $\{\phi(u),\phi(v)\}\in E(H)\iff\{u,v\}\in E(G)$. For a graph class $\mathcal{G}$, a function $f$ defined on $\mathcal{G}$ is said to be an \textit{invariant} for $\mathcal{G}$, if for all graphs $G$ and $H$ in $\mathcal{G}$, $f(G)=f(H)$ if $G$ is isomorphic to $H$. Additionally, if $f(G)=f(H)$ implies that $G$ and $H$ are isomorphic, then $f$ is a \textit{complete invariant} for $\mathcal{G}$. If a function $f:\mathcal{G}\to\mathcal{G}$ is a complete invariant such that $f(G)$ is isomorphic to $G$ for all $G\in\mathcal{G}$, then $f(G)$ is said to be the canonical form of $G$ (under $f$), or simply the \textit{canon} for $G$. In that case, an isomorphism $\varphi$ from $G$ to its canon $f(G)$ is called a \textit{canonical labelling} (under $f$).
%Two coloured graphs are isomorphic if and only if there exists an isomorphism between the two graphs that preserves the vertex colours.    
\paragraph*{Polynomials}
A polynomial is said to be irreducible if it cannot be factorised
further into non-trivial factors. Eisenstein's criterion, stated below,
gives a sufficient condition for irreducibility of univariate
polynomials over rationals with integral coefficients.
\begin{lemma}[Eisenstein]
	Given a univariate polynomial $Q(x)$ with integral coefficients,
	\[
		Q(x) = a_nx^n+a_{n-1}x^{n-1}+\cdots+a_1x+a_0
	\]
	$Q(x)$ is irreducible over rationals if there exists a prime number $p$ such that $p$ does not divide $a_n$, $p$ divides all of $a_{n-1},\ldots,a_1,a_0$ and $p^2$ does not divide $a_0$.
\end{lemma}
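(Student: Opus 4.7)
The plan is to proceed by contradiction and reduce a rational factorization to an integer factorization via Gauss's lemma. First, I would invoke Gauss's lemma: if $Q(x)$ factors nontrivially over $\mathbb{Q}$, then it factors nontrivially over $\mathbb{Z}$. So assume toward a contradiction that $Q(x) = B(x) C(x)$ with $B(x) = b_r x^r + \cdots + b_0$ and $C(x) = c_s x^s + \cdots + c_0$, both in $\mathbb{Z}[x]$, with $r, s \geq 1$ and $r + s = n$.

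Next, I would read off divisibility constraints from the Eisenstein hypotheses. From $a_0 = b_0 c_0$, together with $p \mid a_0$ but $p^2 \nmid a_0$, exactly one of $b_0, c_0$ is divisible by $p$; without loss of generality, $p \mid b_0$ and $p \nmid c_0$. From $a_n = b_r c_s$ and $p \nmid a_n$, neither leading coefficient is divisible by $p$; in particular $p \nmid b_r$.

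Now I would pick the smallest index $k$ with $p \nmid b_k$. This $k$ exists because $p \nmid b_r$, and moreover $k \leq r < n$ since $r \geq 1$ and $s \geq 1$. Expanding the convolution gives
\[
a_k = b_k c_0 + b_{k-1} c_1 + b_{k-2} c_2 + \cdots + b_0 c_k.
\]
Every term after $b_k c_0$ has $p \mid b_i$ for $i < k$, by minimality of $k$; and $p \mid a_k$ because $k < n$ (so $a_k$ is one of the middle coefficients). Therefore $p$ must divide $b_k c_0$. But $p \nmid c_0$, so $p \mid b_k$, contradicting the choice of $k$. This contradiction will complete the proof.

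The only subtle step is the appeal to Gauss's lemma to move from a potential $\mathbb{Q}$-factorization to a $\mathbb{Z}$-factorization; the rest is a clean combinatorial argument on the coefficient convolution. I do not expect a genuine obstacle, since this is the classical proof of Eisenstein's criterion and the hypotheses are tailored exactly to make the induction on the smallest non-$p$-divisible coefficient of $B(x)$ go through.
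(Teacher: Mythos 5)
Your proof is correct and is the standard classical argument for Eisenstein's criterion (Gauss's lemma to reduce to a $\mathbb{Z}$-factorization, then the minimal-index convolution argument). The paper does not prove this lemma — it is stated as a known classical fact and used via citation — so there is no alternative proof in the paper to compare against.
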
   

Let $p_i$ denote the $i^{th}$ prime number for all positive integers $i$. 

\subparagraph*{Arithmetic formula} Let $\mathbb{F}$ be a field and $x_1,x_2,\ldots,x_n$ be indeterminates. Then an \textit{arithmetic circuit} over $\mathbb{F}$ is a directed acyclic graph (DAG) with unique sink vertex, and each internal vertex is labelled by a $+$ or $\times$ (also called sum and product gates respectively), while the source vertices are labelled by either the field elements or the indeterminates. Each vertex of the DAG is associated with a polynomial in $\mathbb{F}[x_1,x_2,\ldots,x_n]$, which is inductively defined as follows. For the source vertices, it is their label. For a $\times$ (or $+$) labelled vertex, it is the product (or sum) of polynomials associated with its children vertices. The \textit{size} of a circuit is the number of vertices in it. Its \textit{height} is the length of the longest path from any source to the sink vertex. An arithmetic circuit for which the underlying DAG is a tree, is an \textit{arithmetic formula}.

Let $A$ be an arithmetic formula over the field $\mathbb{F}$ and indeterminates $x_1,x_2,\ldots,x_n$. The \textit{Formula Evaluation Problem}, with input instance $A$ and a scalar tuple $(c_1,c_2,\ldots, c_n)\in \mathbb{F}^n$, is to compute the value of the formula $A$ at $(c_1,c_2,\ldots,c_n)$.

Now, we have the following after combining~\cite{BCGR92,HAB}.

\begin{theorem}{\rm\cite{BCGR92,HAB}}
\label{thm:formula}
The Formula Evaluation Problem is in $\Log$.
\end{theorem}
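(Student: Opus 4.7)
The plan is to combine the two cited results by chaining them through the standard inclusion $\TCz \subseteq \NC^1 \subseteq \Log$. First I would set up the bit-complexity bookkeeping. For an arithmetic formula $A$ of size $s$ over $\mathbb{F} \in \{\mathbb{Z},\mathbb{Q}\}$ evaluated at $(c_1,\ldots,c_n)$ with $M = \max_i |c_i|$, every subformula's value has magnitude bounded by $M^{O(s)}$, hence bit-length polynomial in the input length. Thus throughout the computation we only ever need to handle integers (or pairs of integers, in the rational case) of polynomial bit-length.

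Second, I would invoke the BCGR92 formula balancing / tree-contraction procedure. Applied to the formula tree $A$, tree-contraction via the classical rake-and-compress primitives produces a balanced computation of depth $O(\log s)$. A compressed chain of sum/product gates with a single ``pending'' argument can be represented as a polynomial of degree at most one in that argument, so each intermediate object manipulated during contraction is an element of $\mathbb{F}$ (or a pair of such elements for the linear form $a \cdot y + b$). Crucially, although BCGR92 is stated as an $\NC^1$ algorithm, each tree-contraction phase is itself logspace-computable: a logspace machine can identify rakeable leaves and compressible chains by directly inspecting the current formula tree encoded on its work tape.

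Third, for the arithmetic at each contraction step we invoke HAB: addition, subtraction and multiplication of integers of polynomial bit-length are in $\TCz$, which sits inside $\Log$. Division of polynomial-bit rationals reduces to the same primitives plus gcd, also in $\TCz$. Since $\Log$-computable functions compose, running the $O(\log s)$ contraction phases — each implemented by a $\Log$ subroutine invoking a $\Log$ arithmetic subroutine on polynomial-bit operands — yields an overall $\Log$ algorithm that eventually produces the single field element $A(c_1,\ldots,c_n)$.

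The main obstacle is the conceptual gap between the native parallel formulation of BCGR92 and the sequential logspace bound we actually need. Rather than literally simulating the $\NC^1$ circuit (which already gives $\Log$ via the standard depth-first simulation of balanced circuits), a cleaner route is to observe that after each balancing phase the current formula shrinks by a constant factor, so a logspace machine can maintain the contracted formula implicitly — storing only pointers and the current summary value of any contracted chain — and iterate $O(\log s)$ times. The delicate point, which is where HAB is essential, is that the summary values at any stage are integers/rationals of polynomial bit-length, so they can be stored and manipulated within the $O(\log n)$ work-tape budget using $\TCz$ arithmetic routines.
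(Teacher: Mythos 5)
The paper does not actually prove this theorem; it states it as a citation, combining BCGR92 (arithmetic formula balancing to logarithmic depth) with HAB (poly-bit integer arithmetic and iterated products in $\dlogtime$-uniform $\TCz$), and implicitly relies on the chain $\TCz \subseteq \NC^1 \subseteq \Log$. Your opening route — balance via BCGR92, do the arithmetic via HAB, and land in $\Log$ through $\NC^1$ — is exactly this and is fine.

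However, the ``cleaner route'' you propose as the preferred argument has a genuine gap. You assert that ``the summary values at any stage are integers/rationals of polynomial bit-length, so they can be stored and manipulated within the $O(\log n)$ work-tape budget.'' A polynomial-bit-length integer has $\mathrm{poly}(n)$ bits and cannot be written on a work tape of $O(\log n)$ cells; the entire point of the logspace model is that such intermediate values must be produced bit-on-demand, not stored. Likewise, you appeal to closure of $\Log$ under composition to justify iterating $O(\log s)$ contraction phases. That closure is a theorem about a \emph{constant} number of compositions: the standard recomputation trick incurs an additive $O(\log n)$ space cost per level of composition, so naively iterating $\Theta(\log n)$ phases gives $O(\log^2 n)$ space, not $O(\log n)$. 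Maintaining ``the contracted formula implicitly'' does not escape this, because the identity of which nodes have been raked/compressed after $k$ phases is itself a function of the full history, and reconstructing it on demand is precisely the recursive recomputation that blows up. The route you dismiss in the parenthetical — output a logspace-uniform bounded-fanin $O(\log n)$-depth circuit (BCGR92 plus HAB for the gates) and evaluate it by standard depth-first simulation, which costs $O(\text{depth}) = O(\log n)$ space since only one gate address per level is kept on the stack — is the argument that actually works and is what the cited results deliver.
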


Indeed, there is a simpler algorithm that shows Formula Evaluation is in $\Log$ which we briefly sketch below. In Section~\ref{sec6} we present
the details in a self-contained proof.

Given as input an arithmetic formula $\Phi$ of \emph{logarithmic depth} we can apply the Ben-Or and Cleve construction \cite{BC} to transform it in logarithmic space into an equivalent constant-width arithmetic branching program, which can then be evaluated in logarithmic space. In general, given an arithmetic formula $\Phi$ (of arbitrary depth) as input, for the underlying tree of the formula we can compute a logarithmic depth \emph{working tree} (which is based on recursive tree separators) in logspace using \cite[Lemma 16]{AGGT16}. From this working tree we can compute, in logspace, an arithmetic formula of logarithmic depth that is equivalent to $\Phi$ by a gadget substitution at the tree separators.

%We know that arithmetic formulas of logarithmic depth can be evaluated in %logspace~\cite{BC}.  
%We can compute a logarithmic depth working tree (recursively balanced tree %separator) for the underlying tree of any arithmetic formula, in logspace %using using~\cite[Lemma 16]{AGGT16}. From the working tree we can compute an %equivalent arithmetic formula by a gadget substitution at the tree %separators. Thus, we can obtain an alternative and relatively simple proof of %the Theorem~\ref{thm:formula}.

\section{Tree canonization}
We recall Miller and Reif's reduction \cite{MR} of tree isomorphism to
polynomial identity testing (PIT). Let $T$ be a rooted tree of height
$h$. For every vertex $v$ of the tree (at some height $h'\le h$) they
%$h'$ is the length of the longest path from $v$ to a leaf) they
construct the polynomial
$Q_v(x_1,\ldots,x_{h'}) \in \mathbb{Z}[x_1,\ldots,x_h]$ with integer
coefficients:
%$F_q$:
\[
Q_v(x_1,\ldots,x_{h'}) = 
\left\{
\begin{array}{ll}
1 & v \mbox{ is a leaf} \\
\prod_{w \textrm{ child of } v}{(x_{h'} - Q_w(x_1,\ldots,x_{h'-1}))} & \mbox{otherwise} \\
\end{array}
\right.
\]

%Miller and Reif \cite{MR} observe that

By definition, the polynomial $Q_v$ is factorized into linear factors
(in the variable $x_{h'}$). Since $\mathbb{Q}[x_1,x_2,\ldots,x_h]$ is
a unique factorization domain, it is easy to argue by induction on the
height that two trees $T$ and $T'$ rooted at vertices $r,r'$,
respectively are isomorphic if and only if $Q_r, Q_{r'}$ are identical
polynomials.

Building on this idea, we define a \emph{univariate} polynomial
$C_T(x)$ for a rooted tree $T$ that is a complete isomorphism
invariant. We fix some notation first.  Let $r$ be the root of
$T$. For every vertex $v$ of $T$, let $T_v$ be the tree rooted at
$v$. Thus $T_r = T$. For each vertex $v$ of $T$ we will define a
polynomial $C_v(x)$. Let the children of $v$ be
$v_1,v_2,\ldots,v_{\deg(v)}$ in an arbitrary but fixed order. Let
$n_v$ denote the number of vertices in the subtree $T_v$.

We define the univariate polynomial $C_v$ inductively as follows:
\begin{equation}\label{def:B}
C_v = 
\left\{
\begin{array}{ll}
x^{n_v}+2x\prod_{i=1}^{deg(v)}{C_{v_i}(x)} +2 & \mbox{if } v \mbox{ is an internal node}\\
x+2 & \mbox{if } v \mbox{ is a leaf}\\
\end{array}
\right.
\end{equation}
%While the definition of the family $C$ is as follows:
%\begin{equation}\label{def:C}
%C_v(x) = x^{n_v} + 2xB_v(x) + 2
%\end{equation}

An easy induction argument implies the following.
\begin{proposition}
  For each vertex $v$ of $T$, $C_v(x)$ is a univariate polynomial of
  degree $n_v$ over the rationals $\mathbb{Q}$.
\end{proposition}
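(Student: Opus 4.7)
The plan is to prove both parts by induction on the height of the subtree $T_v$ (equivalently, on $n_v$). In fact, it will be convenient to strengthen the inductive statement slightly and prove simultaneously that $C_v(x)$ is a polynomial in $\mathbb{Q}[x]$ of degree exactly $n_v$ with \emph{positive} (equivalently, nonzero) leading coefficient. The positivity of the leading coefficient is what lets the inductive step go through cleanly.

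For the base case, $v$ is a leaf, so $n_v = 1$ and $C_v = x + 2$, which is a degree-$1$ polynomial in $\mathbb{Q}[x]$ with leading coefficient $1 > 0$. For the inductive step, suppose $v$ is internal with children $v_1, \ldots, v_{\deg(v)}$, and the statement holds for each $C_{v_i}$. Then each $C_{v_i} \in \mathbb{Q}[x]$, so the defining expression $x^{n_v} + 2x\prod_{i} C_{v_i}(x) + 2$ lies in $\mathbb{Q}[x]$. By the inductive hypothesis, $\deg\!\bigl(\prod_{i} C_{v_i}\bigr) = \sum_{i} n_{v_i} = n_v - 1$, using the fact that the subtree $T_v$ consists of $v$ itself together with the vertices of the $T_{v_i}$. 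Consequently $\deg\!\bigl(2x\prod_{i} C_{v_i}\bigr) = n_v$.

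The only thing to rule out is a cancellation between this term and $x^{n_v}$. But by the induction hypothesis every $C_{v_i}$ has a positive leading coefficient, so the product $\prod_i C_{v_i}$ does too, and thus the leading coefficient of $2x \prod_i C_{v_i}(x)$ is strictly positive. Adding it to the coefficient $1$ of $x^{n_v}$ gives a strictly positive leading coefficient for $C_v$, and the constant $+2$ does not affect the top-degree term. Therefore $C_v \in \mathbb{Q}[x]$ has degree exactly $n_v$ with positive leading coefficient, completing the induction.

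The only mild obstacle is the potential worry of leading-term cancellation when summing $x^{n_v}$ with $2x\prod_i C_{v_i}(x)$; carrying the auxiliary invariant that leading coefficients stay positive (indeed, as one can further check, they remain integers) eliminates this worry immediately. All coefficients produced are integers, so the claim about $\mathbb{Q}$ is in fact sharpened to $\mathbb{Z}[x] \subset \mathbb{Q}[x]$, which will be important later when invoking Eisenstein's criterion.
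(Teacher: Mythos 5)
Your proof is correct and is essentially the ``easy induction argument'' the paper alludes to without spelling out. The one point that needs care -- ruling out cancellation of the degree-$n_v$ terms -- is handled sensibly by carrying the invariant that leading coefficients are positive integers; one could equally well observe (as the paper does implicitly in Lemma~\ref{lem:CIrred}) that the leading coefficient is always odd, hence nonzero, but either auxiliary invariant does the job.
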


The next lemma is crucial.
\begin{lemma}\label{lem:CIrred}
For every vertex $v$ of $T$, $C_v(x)$ is irreducible over $\mathbb{Q}$.
\end{lemma}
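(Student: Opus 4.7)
The plan is to induct on the size $n_v$ of the subtree $T_v$ and apply Eisenstein's criterion with $p = 2$ at each step. The base case is immediate: if $v$ is a leaf then $C_v(x) = x + 2$ is linear, hence irreducible. For the inductive step, note first that since $n_v = 1 + \sum_i n_{v_i}$, the product $\prod_i C_{v_i}(x)$ has degree $n_v - 1$, so $2x \prod_i C_{v_i}(x)$ has degree exactly $n_v$, matching the summand $x^{n_v}$.

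An easy parallel induction shows that each $C_v$ has integer coefficients; consequently the leading coefficient $L$ of $\prod_i C_{v_i}(x)$ is an integer, and the leading coefficient of $C_v$ equals $1 + 2L$, which is odd. This verifies the first Eisenstein hypothesis (non-divisibility of the leading coefficient by $2$) without any further bookkeeping on the children.

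For the remaining two Eisenstein hypotheses: every coefficient of $C_v$ strictly below the leading term arises either from the expansion of $2x \prod_i C_{v_i}(x)$ (and is therefore divisible by $2$) or from the $+2$ summand. Moreover, the factor of $x$ in $2x \prod_i C_{v_i}(x)$ kills any contribution to the constant term, so the constant term of $C_v$ equals exactly $2$, which is not divisible by $4$. Eisenstein's criterion then yields irreducibility of $C_v$ over $\mathbb{Q}$.

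The main point to watch is the top-degree bookkeeping: one must recognize that $x^{n_v}$ and $2x \prod_i C_{v_i}(x)$ share the same top degree and confirm that their combined leading coefficient stays odd. This is precisely where the constants $+2$ and the leading $+1$ in the definition of $C_v$ earn their keep, by forcing the total $1 + 2L$ to be odd for any integer $L$. Once this parity observation is in place, the rest of the Eisenstein verification falls out from the structural symmetry of the recursion with essentially no case analysis.
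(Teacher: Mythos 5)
Your proof is correct and follows essentially the same approach as the paper: applying Eisenstein's criterion with $p=2$, observing that the leading coefficient is odd, the middle coefficients are even (all coming from $2x\prod_i C_{v_i}$), and the constant term is exactly $2$ and hence not divisible by $4$. You simply spell out the bookkeeping (in particular, why the two degree-$n_v$ terms sum to an odd leading coefficient $1+2L$) more explicitly than the paper's terse one-paragraph argument.
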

\begin{proof}
Writing $C_v(x) = \sum_{i=0}^{n_v}{C^{(i)}_v x^i}$ for integers
$C^{(i)}_v$ we observe that the highest degree coefficient
$C^{(n_v)}_v$ is odd while each $C^{(i)}_v$ is even for
$i<n_v$. Moreover, $C^{(0)}_v$ is even but not divisible by
$4$. Hence, the Lemma follows from Eisenstein's criterion
\cite{Eisenstein} applied with $p=2$.
\end{proof} 
Now we have the main lemma of this section.
\begin{lemma} \label{lem:MRDerand}
The polynomial $C_T(x)$ is a complete invariant for the rooted tree $T$.
\end{lemma}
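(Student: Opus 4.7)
The plan is to prove both directions of the ``complete invariant'' statement by structural induction on the tree, with the forward direction being routine and the reverse direction being the substantive content. For the forward direction, if $T$ and $T'$ are rooted isomorphic trees then an isomorphism matches the children of the roots so that the corresponding subtrees are isomorphic; applying the inductive hypothesis to these subtrees shows that the children contribute identical multisets of $C_{v_i}$'s, so the recursive definition (\ref{def:B}) yields $C_T = C_{T'}$. Since $n_v$ is already an isomorphism invariant, this step is immediate.

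For the reverse direction, I would induct on $n_T$. In the base case $T$ is a leaf, $C_T = x+2$ has degree one, so any $T'$ with $C_{T'} = C_T$ must also have $n_{T'} = 1$ and is therefore a leaf. For the inductive step, assume $C_T = C_{T'}$ with roots $r, r'$. Reading off the degree and the constant term from (\ref{def:B}) gives $n_r = n_{r'}$ and forces $r,r'$ to be either both leaves or both internal (since a leaf contributes $C = x+2$ of degree $1$). In the internal case I would subtract $x^{n_r} + 2$ from both sides and divide by $2x$ to obtain the identity
\[
\prod_{i=1}^{\deg(r)} C_{r_i}(x) \;=\; \prod_{j=1}^{\deg(r')} C_{r'_j}(x)
\]
in $\mathbb{Q}[x]$.

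The crux is then to conclude that these two products yield the same multiset of factors $C_{r_i} = C_{r'_{\sigma(i)}}$ for some bijection $\sigma$. I would invoke Lemma~\ref{lem:CIrred}: every $C_{r_i}$ and $C_{r'_j}$ is irreducible over $\mathbb{Q}$, so since $\mathbb{Q}[x]$ is a UFD the two factorizations agree up to permutation and rational unit multiples. The potential obstacle is that $C_v$ is not monic in general (its leading coefficient is odd but may exceed $1$), so one must pin down the exact representative within each associate class; here I would use the fact that every $C_v$ produced by (\ref{def:B}) has constant term exactly $2$, which pins down the scalar uniquely and identifies the multisets $\{C_{r_i}\}$ and $\{C_{r'_j}\}$ as equal.

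Having matched children via $\sigma$ with $C_{r_i} = C_{r'_{\sigma(i)}}$ and each subtree $T_{r_i}$ strictly smaller than $T$, the inductive hypothesis gives $T_{r_i} \cong T_{r'_{\sigma(i)}}$ as rooted trees. Gluing these isomorphisms together at the roots $r \mapsto r'$ yields the desired isomorphism $T \cong T'$, completing the induction.
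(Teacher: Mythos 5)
Your proof is correct and follows essentially the same strategy as the paper's: structural induction, reducing $C_v - x^{n_v} - 2$ by the factor $2x$, and invoking irreducibility of the child polynomials (Lemma~\ref{lem:CIrred}) together with unique factorization in $\mathbb{Q}[x]$. You are in fact slightly more careful than the paper at one point: you explicitly resolve the units/associates ambiguity in $\mathbb{Q}[x]$ by observing that every $C_v$ has constant term exactly~$2$, which pins down the representative in each associate class, whereas the paper's proof leaves this normalization implicit.
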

\begin{proof}
We show this statement by structural induction. It is trivial if the tree
consists of a single leaf $v$ -- because then $C_v(x) = x+2$. Suppose the
statement holds true for each child $v_i$ of $v$. In other words, given 
$C_{v_i}$, we can reconstruct the tree $T_{v_i}$. So we just need to 
prove that given $C_v(x)$ we can uniquely infer the multiset 
$\{C_{v_1}(x), C_{v_2}(x),\ldots, C_{v_{deg(v)}}(x)\}$, because that will
allow us to reconstruct the multiset 
$\{T_{v_i}(x): v_i \mbox{ is a child of } v\}$ and hence, by hanging the 
multi-set of trees $T_{v_i}$  from the node $v$, the tree $T_v$.

But the irreducibility of $C_{v_i}(x)$ for each $v_i$, that is
guaranteed by Lemma~\ref{lem:CIrred}, ensures the unique factorization
of $\frac{C_v(x)-x^{n_v}-2}{2x}$ into $C_{v_i}$'s which are
canonical for the subtrees $T_{v_i}$s.
%Moreover, given $C_{v_i}$ we can uniquely infer
%$B_{v_i}(x) = (C_{v_i}(x) - x^{deg(C_{v_i})} - 2)/(2x)$. 
This completes the proof.
\end{proof}
We assume that the tree is given as input to the logspace Turing machine in its pointer representation, i.e., the list of edges of the tree.
We can now show the following.
\begin{theorem}[Lindell\cite{Lindell}]
\label{thm:Lindell}
Tree canonization is in $\Log$.
\end{theorem}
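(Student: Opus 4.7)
The plan is to output the coefficient list (equivalently, an evaluation vector) of the univariate polynomial $C_T(x)$ as the canon. By Lemma~\ref{lem:MRDerand}, $C_T$ is a complete invariant for rooted trees, so with any fixed ordering of monomials the list is a canonical string. I would handle the unrooted case via a preliminary logspace reduction: locate the center of $T$ (either one or two vertices, computable in $\Log$ since distances, ancestor relations, and in particular iterative leaf stripping in a tree all reduce to tree reachability, which is in $\Log$), root at the center; in the two-vertex case, canonize both rootings and output the lexicographically smaller of the two candidate canons.

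The next step is to construct, in $\Log$, an arithmetic formula $\Phi_T(x)$ mirroring the recursive definition in equation~\eqref{def:B}. For a leaf $v$ the subformula is $x+2$; for an internal $v$ it is the sum of $x^{n_v}$ (written out as $n_v$ copies of $x$ multiplied together), $2\cdot x \cdot \prod_i \Phi_{v_i}(x)$, and $2$. The subtree sizes $n_v$ are logspace-computable (count descendants, again via tree reachability), and emitting the formula is a straightforward logspace traversal of $T$. The total formula size is $\sum_v O(n_v) = O(n^2)$, hence polynomial in the input size.

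With $\Phi_T$ in hand, I invoke the logspace formula evaluation algorithm of Buss et al.\ (the \cite{BCGR92,HAB} theorem quoted above) at the $n+1$ distinct integer points $0,1,\ldots,n$. Since $\deg C_T = n$, these evaluations already form a complete invariant and can themselves serve as the canon; alternatively, composing with logspace univariate polynomial interpolation (also via \cite{HAB}) recovers the coefficient list of $C_T(x)$. Because logspace is closed under function composition, the entire pipeline \textemdash{} root-at-center, build-formula, evaluate, (optionally) interpolate \textemdash{} lies in $\Log$, and correctness is immediate from Lemma~\ref{lem:MRDerand}.

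After Lemma~\ref{lem:MRDerand} there is little genuine obstacle remaining: the heavy lifting is outsourced. The only point requiring a short check is bit-complexity \textemdash{} the coefficients of $C_T(x)$, and hence evaluations at integers of magnitude $O(n)$, can reach $\mathrm{poly}(n)$ bits, but this is precisely the regime designed for by \cite{BCGR92,HAB}, so no separate analysis is needed. The secondary subtlety of breaking symmetry when the center of an unrooted tree is an edge is resolved, as noted, by comparing the two candidate canons lexicographically.
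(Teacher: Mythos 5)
Your proposal is correct and follows essentially the same approach as the paper: build an arithmetic formula mirroring the inductive definition of $C_T$ (each internal node expanded into a fan-in-$3$ sum with the $x^{n_v}$, $2x\prod C_{v_i}$, and constant $2$ summands), invoke \cite{BCGR92,HAB} for logspace formula evaluation at $n+1$ integer points, and optionally interpolate to recover the coefficient list, with correctness coming from Lemma~\ref{lem:MRDerand}. Your explicit treatment of the unrooted case by rooting at the tree's center (breaking the two-vertex tie lexicographically) is a standard and correct step that the paper's proof leaves implicit.
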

\begin{proof}
From an inspection of \cref{def:B} it is clear that the definition of 
$C_v(x)$ can be unravelled into a formula whose structure mirrors the underlying
tree. In other words, we replace each node by a sum gate of fan-in $3$ with
the first summand being a product gate with $n_v$ many $x$ gates as 
children, the second summand being a product gate with $2, x$ and the $n_v$ 
gates for the children as input and finally the constant $2$, as the third 
summand.

%Now, we can balance the formula to get an equivalent formula of depth $O(\log n)$, in $\Log$ using Lemma~\ref{lem:balFormula}. Then we apply the Ben-Or and Cleve's~\cite{BC} construction to the balanced formula to get a equivalent constant width arithmetic branching program, so that it can be evaluated in $\Log$.  
Our next step is to evaluate the
polynomial $C_r(x)$ (for the root $r$) at $n+1$ distinct values of
$x$, say, $1,2,\ldots,n+1$. The values
$C_r(1),C_r(2),\ldots,C_r(n+1)$ will uniquely determine $C_r(x)$,
since the degree of the polynomial $C_r(x)$ is precisely the size $n$
of the tree, and hence serve as the canon for the tree. 

This evaluation can be done in $\Log$ using the logspace
  algorithm in the alternative proof of Theorem~\ref{thm:formula}
  as described in the appendix. This algorithm first transforms the
  arithmetic formula for $C_r(x)$ into an equivalent logarithmic depth
  formula, using Lemma~\ref{lem:balFormula}. Then it applies Ben-Or and Cleve's
  construction~\cite{BC} to the balanced formula to get an equivalent
  constant width polynomial size arithmetic branching program that can be
  evaluated in $\Log$.

%\textcolor{red}{Now we can use \cite[Lemma~6.3 and Theorem~6.1]{BCGR92} to convert the formula 
%	over the univariate polynomial
%	ring $\mathbb{Q}[x]$ into a balanced bounded fan-in circuit of depth 
%	$O(\log{n})$. Since the circuit has bounded fan-in  and logarithmic depth
%	it can be converted into an equivalent formula which is a polynomial factor
%	larger but has the same depth.}

%\textcolor{red}{Our next step is to evaluate the polynomial on $n+1$ distinct values since the degree is precisely the size of the tree, n. This can be be done using 
%\cite[Corollary~6.8]{HAB}\footnote{While the quoted bound in \cite{HAB} is 
%	$\dlogtime$-uniform $\TC^0$ the latter is known to be contained in
%	$\dlogtime$-uniform $\NC^1$ and therefore $\Log$} in $\Log$. Since polynomial 
%interpolation is in $\Log$ using \cite[Corollary~6.5]{HAB} we can obtain the 
%final polynomial in $\Log$.} 
%
\end{proof}
To canonize unrooted trees we do the following. We consider the $n$ different rooted trees, one for each possible root, and their polynomial canon as computed by Theorem~\ref{thm:Lindell}. From these we declare the lexicographically smallest polynomial to be the canon of the unrooted tree. 
\section{Labelled tree canonization}
In this section we consider a general version of the tree canonization
problem. Nodes of labelled trees are labelled by elements from a
specified set of labels. Two labelled trees are isomorphic if there is
a label preserving isomorphism between them. Given the label set,
canonization of labelled trees is well defined.

We will give an algorithm for labelled tree canonization that assigns
a univariate polynomial as the canon for an input labelled tree. This
will be based again on Eisenstein's criterion for irreducibility of
polynomials over rationals. Without loss of generality we can assume
that the label set is a subset of natural numbers. For a rooted
labelled tree $T$, we will associate a univariate polynomial $C_v(x)$
with each node $v$, defined inductively as follows.
\[
C_v(x) = 
\begin{cases}
	x+p_{\ell(v)} & \text{ if } v \text{ is a leaf node}\\
	x^{n_v} + p_{\ell(v)}x(\prod_{u~\text{child of}~v}^{~~}C_u(x))+ p_{\ell(v)} &\text{ if }  v  \text{ is an internal node} 
\end{cases}
\]
where $p_i$ for $i\in\N$ is the $i$th prime number, $n_v$ is the
number of vertices in the subtree $T_v$, and $\ell(v)\in\N$ is the
label of the node $v\in V(T)$.

We claim that the polynomial associated with root of the tree is a
canon for the rooted labelled tree. Similar to the polynomial designed
for unlabelled trees, we use interpolation and Chinese remaindering
for the actual canon.
\begin{lemma}\label{lem:labIrr}
	For every vertex $v$ of $T$, $C_v(x)$ is irreducible over $\mathbb{Q}$. 
\end{lemma}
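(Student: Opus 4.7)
The plan is to mimic the proof of \cref{lem:CIrred} for the unlabelled case, but now applying Eisenstein's criterion with the prime $p = p_{\ell(v)}$ instead of $2$. The base case (leaf node) is immediate, since $C_v(x) = x + p_{\ell(v)}$ is linear and therefore irreducible. So the entire argument reduces to a direct coefficient analysis of the internal-node formula
\[
C_v(x) = x^{n_v} + p_{\ell(v)}\, x \prod_{u \text{ child of } v} C_u(x) + p_{\ell(v)}.
\]

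First I would verify the degree. Since $\sum_u n_u = n_v - 1$, the polynomial $\prod_u C_u(x)$ has degree $n_v - 1$, so the middle term $p_{\ell(v)} x \prod_u C_u(x)$ has degree exactly $n_v$, matching the explicit $x^{n_v}$ term. In particular $\deg C_v = n_v$, which is consistent with the earlier proposition.

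Next I would read off the three coefficient conditions required by Eisenstein at the prime $p = p_{\ell(v)}$. The leading coefficient is $1 + p_{\ell(v)} \cdot \mathrm{lc}\bigl(\prod_u C_u\bigr) \equiv 1 \pmod{p_{\ell(v)}}$, so $p_{\ell(v)}$ does not divide it. For any $0 < k < n_v$, the coefficient of $x^k$ gets no contribution from either $x^{n_v}$ or the constant $p_{\ell(v)}$, so it comes entirely from $p_{\ell(v)} x \prod_u C_u(x)$ and is therefore divisible by $p_{\ell(v)}$. Finally, the constant term is $p_{\ell(v)} \cdot 0 + p_{\ell(v)} = p_{\ell(v)}$ because the factor $x$ kills any constant contribution from $\prod_u C_u(x)$; hence $p_{\ell(v)}$ divides the constant term but $p_{\ell(v)}^2$ does not. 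Applying Eisenstein's criterion then yields the conclusion.

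I do not foresee a real obstacle here: the whole point of the definition of $C_v$ is that each of the three Eisenstein conditions is transparent from its shape, and crucially the argument does not require any inductive hypothesis on the children (the irreducibility of each $C_u$ is not needed to conclude the irreducibility of $C_v$). The only thing to keep an eye on is that the degree bookkeeping gives exactly the monomial $x^{n_v}$ that one expects, so that the ``no other term contributes to the middle coefficients'' observation is genuinely true; but this falls out of $\sum_u n_u = n_v - 1$ together with the explicit factor $x$ inside the middle term.
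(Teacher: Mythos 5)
Your proof is correct and takes essentially the same route as the paper's: apply Eisenstein's criterion with $p = p_{\ell(v)}$ by inspecting the leading coefficient, the intermediate coefficients, and the constant term directly from the recursion. Your version is marginally more explicit about the degree bookkeeping (noting that the middle term also contributes to the $x^{n_v}$ coefficient, so the leading coefficient is $1 \bmod p_{\ell(v)}$ rather than exactly $1$), which is a point the paper compresses into a parenthetical.
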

\begin{proof}
	We first note that the degree of $C_v(x)$ is $n_v$ for each node $v$
	of $T$. This follows from the definition of $C_v$. For a leaf node
	$v$, $C_v(x)=x+p_{\ell(v)}$ is irreducible. For an internal node
	$v$, $C_v(x)$ is irreducible by Eisenstein's criterion applied with
	prime $p_{\ell(v)}$. That is, the leading coefficient of $C_v(x)$ is
	not divisible by $p_{\ell(v)}$ (indeed, by construction it is $1 (\mod
	p_{\ell(v)})$). Furthermore, all other coefficients are divisible by
	$p_{\ell(v)}$, and the constant term is not divisible by $p^2_{\ell(v)}$.
\end{proof}

\begin{lemma}
	$C_r(x)$ is a complete invariant for the rooted tree $T$, where $r$ is the root of $T$.
\end{lemma}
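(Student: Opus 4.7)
The plan is to mirror the proof of Lemma~\ref{lem:MRDerand} by structural induction on the rooted labelled tree, with the added twist that $C_v(x)$ must now also encode the label $\ell(v)$. Irreducibility, established in Lemma~\ref{lem:labIrr}, together with unique factorization in $\mathbb{Q}[x]$, will again do the heavy lifting.

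For the base case, if $v$ is a leaf then $C_v(x)=x+p_{\ell(v)}$ has degree $1$, and its constant term $p_{\ell(v)}$ uniquely determines $\ell(v)$ (primes are recoverable from their values), hence the singleton labelled tree $T_v$. For the inductive step, assume the statement holds for every proper subtree and suppose $v$ is internal. From $C_v(x)$ I would read off: (i) $n_v$ as the degree of $C_v$; (ii) $p_{\ell(v)}$ as the constant term of $C_v$, since the summand $x^{n_v}$ and the summand $p_{\ell(v)}x\prod_i C_{v_i}(x)$ both vanish at $x=0$; and thus $\ell(v)$; and (iii) the product
\[
\prod_{i=1}^{\deg(v)} C_{v_i}(x)=\frac{C_v(x)-x^{n_v}-p_{\ell(v)}}{p_{\ell(v)}\,x},
\]
which is computable from $C_v(x)$ and the already-recovered values $n_v$ and $p_{\ell(v)}$.

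Now the crux: by Lemma~\ref{lem:labIrr} each $C_{v_i}(x)$ is irreducible over $\mathbb{Q}$, and by construction each is monic. Since $\mathbb{Q}[x]$ is a unique factorization domain, factoring the product above into monic irreducibles recovers the multiset $\{C_{v_1}(x),\ldots,C_{v_{\deg(v)}}(x)\}$ uniquely. The induction hypothesis then reconstructs each labelled subtree $T_{v_i}$ from its polynomial $C_{v_i}(x)$, and hanging this multiset of subtrees under a fresh root labelled $\ell(v)$ reconstructs $T_v$ uniquely up to isomorphism.

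The only bookkeeping to be careful about is distinguishing the leaf rule from the internal-node rule purely from $C_v(x)$: internal nodes have $n_v\ge 2$, so their polynomials have degree at least $2$, while leaves produce a degree-$1$ polynomial; the two cases therefore never collide. Since the constant-term extraction of $p_{\ell(v)}$ is uniform across both cases, the decoding procedure is well-defined, and the argument is entirely parallel to the unlabelled case. I do not anticipate any genuine obstacle beyond this routine case-split.
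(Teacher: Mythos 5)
Your proof follows the paper's proof almost verbatim: structural induction, recover $n_v$ from the degree and $p_{\ell(v)}$ from the constant term, then divide out and use irreducibility (Lemma~\ref{lem:labIrr}) plus unique factorization in $\mathbb{Q}[x]$ to recover the multiset of child polynomials. The only substantive addition you make is the parenthetical claim ``by construction each is monic,'' and that claim is \emph{false}. For an internal node $v$ the term $p_{\ell(v)}\,x\prod_u C_u(x)$ has degree exactly $n_v$ (since $\deg\prod_u C_u = n_v-1$), so it collides with $x^{n_v}$; the leading coefficient of $C_v$ is $1+p_{\ell(v)}\cdot(\text{leading coeff.\ of }\prod_u C_u)$, which is $\equiv 1 \pmod{p_{\ell(v)}}$ but is not $1$. (Indeed the paper's own irreducibility proof, Lemma~\ref{lem:labIrr}, is careful to say the leading coefficient is ``$1\ (\mathrm{mod}\ p_{\ell(v)})$,'' not $1$.)

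This matters because you are using monicity as the normalization that turns ``unique factorization up to units in $\mathbb{Q}[x]$'' into ``unique recovery of the exact polynomials $C_{v_i}$.'' Without monicity you only recover each $C_{v_i}$ up to a nonzero rational scalar, and the induction hypothesis is about the polynomials themselves, not their scalar multiples. The repair is easy but has to be made: observe instead that each $C_v$ is a \emph{primitive} integer polynomial (the constant term is the prime $p_{\ell(v)}$ and the leading coefficient is coprime to it, so the content is $1$) with positive leading coefficient. By Gauss's lemma, a product of such polynomials factors uniquely into primitive irreducible integer polynomials with positive leading coefficients, which pins down the multiset $\{C_{v_1},\ldots,C_{v_k}\}$ exactly. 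With ``monic'' replaced by ``primitive with positive leading coefficient'' your argument is correct and coincides with the paper's (which is itself terse on exactly this normalization point). One further nit: your case split ``leaves have degree $1$, internal nodes have degree $\ge 2$'' is correct, but you should also note that a degree-$1$ $C_v$ could in principle be $x+c$ for any $c$; here the base case works because $C_v=x+p_{\ell(v)}$ and primes are in bijection with labels, which you do say.
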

\begin{proof}
	We will use structural induction for the proof. For a tree
	with a single vertex $r$, $C_r(x)$ is canonical. Now, for the
	tree $T$ rooted at $r$, let the children of $r$ be
	$v_1,v_2,\ldots,v_k$. From the definition of $C_r(x)$
	and Lemma~\ref{lem:labIrr} we have that
	$\frac{C_r(x)-x^{n_r}-p_{\ell{(r)}}}{p_{\ell{(r)}}x}$ can be
	uniquely factorized as $C_{v_1}(x)C_{v_2}(x)\ldots
	C_{v_k}(x)$, as each $C_{v_i}$ is irreducible. By the
	induction hypothesis, the polynomial $C_{v_i}(x)$ is canonical
	for the subtrees $T_{v_i}$ rooted at $v_i, 1\le i\le k$.  The
	label $\ell(r)$ of the root node $r$ itself is encoded in the
	constant term $p_{\ell(r)}$ in $C_r(x)$. Putting it together,
	it now follows that the polynomial $C_r(x)$ is a canonical
	description of the labelled tree $T=T_r$.
\end{proof}

%\begin{proposition}
%	Degree of $C_r(x)$ is polynomially bounded and and
%        coefficients in $C_r(x)$ are of $poly(n)$ bits.
%\end{proposition}

We can now use the same logspace algorithm of Theorem~\ref{thm:Lindell} to
compute canons of labelled trees as well. The canon can be computed as
a list of pairs $(C_r(a),p_i), 0\le a\le \deg(C_r)$ for the first say
$\ell$ primes $p_i, 1\le i\le \ell$ such that $\prod_{i=1}^\ell p_i$
exceeds the coefficients of $C_r$.

Labelled tree canonization can be conveniently used to compute
canonical forms for other graph classes that have a ``tree like''
decomposition.
\subsection{Block graphs}
%Key idea: Label the Block-cut tree nodes: block nodes are labelled with clique size while the cut nodes are labelled $1$.
Recall that \textit{block graphs} are connected graphs, such that
their biconnected components are all cliques. Any connected graph is
uniquely decomposed into its biconnected components. The
\textit{block-cut tree} of $G$, with nodes as blocks and cut vertices,
encodes this decomposition. The following is a direct consequence of isomorphism invariance of connectivity.
\begin{lemma}\label{lem:blockTree}
	Any isomorphism between connected graphs $G_1$ and $G_2$ preserves
	cut vertices, blocks, block orders and paths between vertices.
\end{lemma}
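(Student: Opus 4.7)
The plan is to handle each of the four preservation claims in turn by unfolding the underlying graph-theoretic definitions and invoking the fact that an isomorphism $\phi:V(G_1)\to V(G_2)$ is a bijection preserving the edge relation. Every property phrased purely in terms of vertices, edges, and quantification over these is then automatically $\phi$-invariant.

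First, for cut vertices, I would observe that $v\in V(G_1)$ is a cut vertex iff $G_1-v$ has strictly more connected components than $G_1$. The restriction of $\phi$ to $V(G_1)\setminus\{v\}$ is an isomorphism from $G_1-v$ onto $G_2-\phi(v)$, and connectedness (indeed, the number of connected components) is an isomorphism invariant; hence $v$ is a cut vertex of $G_1$ iff $\phi(v)$ is a cut vertex of $G_2$. Second, biconnectedness is defined as connectedness together with the absence of cut vertices, both of which have just been shown to be isomorphism invariants, so $\phi$ maps biconnected subgraphs to biconnected subgraphs. Since vertex-set inclusion is trivially preserved by the bijection $\phi$, maximality is preserved as well, and $\phi$ induces a bijection between the blocks of $G_1$ and the blocks of $G_2$.

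Third, the \emph{order} of a block --- meaning the cardinality of its vertex set --- is preserved because $\phi$ restricts to a bijection from the vertex set of each block onto that of its image. Fourth, a path between $u$ and $v$ in $G_1$ is a sequence $u=u_0,u_1,\ldots,u_k=v$ of pairwise distinct vertices with $\{u_i,u_{i+1}\}\in E(G_1)$ for each $i$; since $\phi$ preserves both adjacency and distinctness, the image sequence $\phi(u_0),\ldots,\phi(u_k)$ is a path between $\phi(u)$ and $\phi(v)$ in $G_2$, and the symmetric argument with $\phi^{-1}$ gives a bijection between the two path sets.

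The proof is essentially mechanical and I do not anticipate any genuine obstacle; the only point meriting care is pinning down the intended reading of ``block order'', which I take to be block cardinality. Any other natural reading (e.g., the order in which blocks appear along a path in the block-cut tree) is an immediate consequence of combining the first three items with the fact that paths are preserved.
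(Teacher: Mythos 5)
Your proof is correct and matches the paper's intent: the paper dismisses this lemma in one line as ``a direct consequence of isomorphism invariance of connectivity,'' and your argument simply spells out that reduction, including the correct reading of ``block order'' as block cardinality (which the paper later confirms by labelling block nodes with clique sizes).
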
 
%  maps cut
%vertices to cut vertices and non cut vertices to non cut vertices. For
%any triple of cut vertices $c_1, c_2$ and $c_3$ in $V(G_1)$, if $c_2$
%lies on a path between $c_1$ and $c_3$ in $G_1$, then for any
%isomorphism $\varphi$ from $G_1$ to $G_2$, $\varphi(c_2)$ must lie on
%a path between $\varphi(c_1)$ and $\varphi(c_3)$ in $G_2$.
%\begin{proof}
%	Isomorphism invariance of the connectivity is contradicted if the a cut vertex of one gr%aph gets mapped to a non-cut vertex in the graph under an isomorphism. For the same re%ason, if $\varphi{c_2}$ 
%\end{proof}
We also observe the following about isomorphism of vertex colored cliques.
\begin{proposition}\label{obs:cliqueCol}
	Two vertex colored cliques are colour-preserving isomorphic precisely
	if they have the same number of vertices with the same multiplicities
	of each colour class.
\end{proposition}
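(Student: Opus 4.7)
The plan is to establish the biconditional directly, handling the two directions separately. For the forward direction, I would assume a colour-preserving isomorphism $\phi:V(K_1)\to V(K_2)$ exists between the two vertex coloured cliques $K_1$ and $K_2$. Since $\phi$ is a bijection that preserves colours, it restricts to a bijection between the vertices of colour $c$ in $K_1$ and those of colour $c$ in $K_2$ for each colour $c$; this immediately yields equal total vertex counts and equal multiplicities in every colour class.

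For the reverse direction, suppose the two cliques have matching colour multiplicities. Then for each colour $c$ I can fix any bijection between the (equinumerous) sets of $c$-coloured vertices in $K_1$ and $K_2$, and take the union of these partial bijections to obtain a single colour-preserving bijection $\phi:V(K_1)\to V(K_2)$. The essential structural fact is that in a clique every pair of distinct vertices is adjacent, so any bijection between the vertex sets of two cliques of the same size automatically preserves both the edge relation and the non-edge relation (the latter being trivial, as there are no non-edges among distinct vertices). Hence $\phi$ is a graph isomorphism, and by construction it preserves colours.

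I do not anticipate a real obstacle: the proposition reduces to unpacking definitions, with the only conceptually important input being the observation that a clique's edge set is determined entirely by its vertex set, which trivialises the edge-preservation condition and decouples it from the colour-matching condition.
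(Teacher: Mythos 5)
Your argument is correct, and it is the standard one; the paper states this as a proposition (labelled as an observation) without an explicit proof, precisely because it reduces to the observation you identify — that in a clique every bijection of vertex sets is automatically an edge-preserving isomorphism, so the problem collapses to matching colour multiplicities. Your proof fills in exactly the implicit reasoning the paper relies on, with no divergence from the intended route.
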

From Lemma~\ref{lem:blockTree} and Lemma~\ref{obs:cliqueCol} we give a
simple logspace computable reduction from isomorphism testing of
block graphs to labelled tree isomorphism. For a block graph $G$,
consider the labelled tree $T$ obtained from the block-cut tree of $G$
by labelling the block nodes by the corresponding clique size, and
labelling the cut nodes by $1$. Block-cut tree for any graph can be computed in $\Log$ using Reingold's connectivity algorithm~\cite{Reingold08}. However, if we are promised that the input graph is a block graph, then we can compute its block-cut tree directly, without invoking~\cite{Reingold08} in the spirit of simplicity.   
\begin{lemma}
	\label{lem:labBlock}
	Given a block graph $G=(V,E)$, its labelled block-cut tree can
	be computed in $\Log$.
\end{lemma}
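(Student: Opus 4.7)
The plan is to exploit the block-graph structure directly, without invoking Reingold's connectivity algorithm. The key structural claim I will establish first is that in a block graph, for every edge $uv$, the unique block containing $uv$ equals $B_{uv} := \{u,v\}\cup (N(u)\cap N(v))$. One direction holds because any common neighbor $w$ of $u$ and $v$ yields a triangle $uvw$, which is $2$-connected and hence must lie entirely in the block through $uv$; the other direction follows because the block containing $uv$ is itself a clique. As a bonus, this already shows that any two common neighbors of $u$ and $v$ must be mutually adjacent, so $B_{uv}$ is a clique.

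Second, I will characterize cut vertices locally: a vertex $v$ is a cut vertex of the block graph iff $N(v)$ is not a clique, i.e., iff there exist $u_1, u_2 \in N(v)$ with $u_1 \not\sim u_2$. The forward direction uses that if $N(v)$ is a clique then any two neighbors of $v$ remain mutually adjacent in $G-v$, so $v$ cannot disconnect $G$. The reverse direction uses the first claim: two non-adjacent neighbors force $vu_1$ and $vu_2$ into distinct blocks, so $v$ lies in at least two blocks.

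The $\Log$ algorithm then has three parts. I identify each block by its lex-smallest edge: enumerating edges $(u,v)$ in lex order, I emit a block node for $(u,v)$ iff no earlier edge $(a,b)$ satisfies $a,b \in B_{uv}$, a check that is in $\Log$ by iterating over earlier edges and testing membership in $B_{uv}$ through adjacency to both $u$ and $v$. The attached label is $|B_{uv}|$, computed by counting vertices $w$ with $w\in B_{uv}$. Next, I scan each vertex $v$ and emit a cut node labelled $1$ iff $v$ has two non-adjacent neighbors. Finally, for each cut vertex $v$ and each canonical block representative $(a,b)$, I emit a block-cut tree edge between the two iff $v\in B_{ab}$.

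The main obstacle is conceptual rather than technical: establishing the two structural facts above, after which every remaining step becomes a constant number of nested $O(\log n)$-counter loops and hence lies in $\Log$. The one mild subtlety is that the canonical naming of blocks (by lex-smallest edges) must be used consistently across all three scans, but since this rule is deterministic and reads directly from the adjacency matrix, there is no real difficulty.
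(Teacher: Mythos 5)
Your proof is correct and takes the same high-level route as the paper: both avoid Reingold's connectivity algorithm by exploiting the local clique structure of block graphs, and both use the identical cut-vertex characterization ($v$ is a cut vertex iff $N(v)$ is not a clique). The operational difference is in how blocks are detected. The paper takes the complementary view — it looks at a cut vertex $v$, observes that $N(v)$ decomposes into disjoint cliques $C_1,\dots,C_k$, and declares each $C_i\cup\{v\}$ a block incident on $v$ — while you identify the block through each edge $uv$ directly as the common-neighborhood closure $B_{uv}=\{u,v\}\cup(N(u)\cap N(v))$ and canonicalize blocks by lex-smallest edge. Your version is somewhat more careful on two points the paper glosses over: you actually prove the two structural facts (that $B_{uv}$ is exactly the block through $uv$, and the cut-vertex criterion) rather than asserting them, and your deterministic lex-smallest naming cleanly avoids emitting the same block node from several cut vertices, and also handles the degenerate single-clique graph (which has a block but no cut vertices, so the paper's block-detection-via-cut-vertices scheme would, read literally, miss it). Both arguments are sound and give the claimed $\Log$ bound.
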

\begin{proof}
	We can identify the cut vertices as follows. A vertex $v\in
	V(G)$ is a cut vertex if its neighbourhood is not a clique, instead it is a disjoint union of cliques, $C_1,C_2,\ldots,C_k$. Each clique $C_i\cup\{v\}$  constitutes a block incident on the cut vertex $v$.
	This characterizes the tree by making the cut vertex node $v$ adjacent to block node $C_i\cup\{v\}$. This is clearly in logspace.
	
	Finally, we can output the
	labelled block-cut tree by labelling the blocks with their
	sizes and labelling the cut vertices by $1$.
\end{proof}      

%We claim that the tree $T$
%is canonical for $G$.
We want to show that block graphs $G_1$ and $G_2$ are isomorphic if and only if their labelled block-cut trees $\tilde{G_1}$ and $\tilde{G_2}$ respectively are isomorphic.   
\begin{lemma}
	\label{lem:blockCan}
	Two block graphs $G_1$ and $G_2$ are isomorphic if and only if their labelled block-cut trees $\tilde{G}_1$ and $\tilde{G}_2$ respectively are isomorphic.
\end{lemma}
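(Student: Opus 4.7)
The plan is to prove both directions, with the forward direction being essentially repackaging and the reverse direction requiring explicit construction. For the forward direction, by \cref{lem:blockTree} any isomorphism $f\colon G_1 \to G_2$ preserves cut vertices and sends each block to a block. Because blocks of a block graph are cliques and $f$ restricted to a block is a clique isomorphism, block sizes are preserved. Hence the induced map on block-cut tree nodes (cut node $\tilde v \mapsto \widetilde{f(v)}$ and block node $b_B \mapsto b_{f(B)}$) is a label-preserving isomorphism of $\tilde{G}_1$ and $\tilde{G}_2$: labels match by construction (block labels equal clique sizes and cut labels equal $1$), and tree-adjacency is preserved since a cut vertex $v$ lies in a block $B$ iff $f(v)$ lies in $f(B)$.

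For the reverse direction, given a label-preserving tree isomorphism $\phi\colon V(\tilde{G}_1) \to V(\tilde{G}_2)$, I would construct a graph isomorphism $f\colon V(G_1) \to V(G_2)$ in two stages. Since cut nodes are labelled $1$ while block nodes carry their clique size (at least $2$, apart from the trivial single-vertex case), $\phi$ sends cut nodes to cut nodes and block nodes to block nodes. First, define $f$ on every cut vertex $v$ of $G_1$ by setting $f(v) := v'$, where $v'$ is the cut vertex of $G_2$ whose block-cut tree node is $\phi(\tilde v)$. Second, for each block $B$ of $G_1$ with tree node $b$, let $B'$ be the block of $G_2$ corresponding to $\phi(b)$. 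Label-preservation gives $|B|=|B'|$, and the cut vertices of $B$ (the tree-neighbours of $b$) map bijectively under $\phi$ to the cut vertices of $B'$, so the non-cut vertex sets of $B$ and $B'$ are equinumerous; fix an arbitrary bijection between them to extend $f$ on the non-cut vertices in $B$.

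It remains to verify that $f$ is a well-defined bijection that preserves adjacency. Each non-cut vertex lies in a unique block, so the per-block bijections do not conflict; cut vertices are assigned exactly once from $\phi$; thus $f$ is a well-defined bijection. Adjacency is preserved because, in a block graph, the edge set is the disjoint union over blocks of the edges of the (clique) block, and $f$ restricts to a bijection between the cliques $B$ and $B'$ of equal size, automatically mapping edges to edges. The main subtlety, and the only non-bookkeeping point, is that a single cut vertex of $G_1$ typically belongs to several blocks, so the per-block extensions must be \emph{glued} together consistently; defining $f$ on cut vertices once, globally via $\phi$, before carrying out the per-block extensions is precisely what guarantees this consistency. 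With this order of definition the two halves of the iff cancel out exactly.
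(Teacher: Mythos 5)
Your proof is correct and follows essentially the same route as the paper's: both directions are handled by explicit construction, with the forward map induced on tree nodes by the graph isomorphism (using \cref{lem:blockTree} and size-preservation of clique images), and the reverse map built by first transporting cut vertices via the tree isomorphism and then filling in non-cut vertices block by block using the label to match block sizes. Your explicit remark that defining $f$ on cut vertices globally \emph{before} the per-block extensions is what guarantees consistent gluing is a useful clarification of a point the paper's proof leaves implicit, but it is the same argument.
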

\begin{proof}
	For the forward direction, let $\phi$ be an isomorphism from $G_1$ to $G_2$. This isomorphism maps a block $B_1$ of $G_1$ to a block $B_2=\phi(B_1)$ of $G_2$ and a cut vertex $c_1$ of $G_2$ to $\phi(c_2)$ of $G_2$. Further, it preserves the incidence between cut vertices and block nodes, and block orders due to Proposition~\ref{obs:cliqueCol}. Block order is the label of the corresponding block node by construction. Therefore we can define $\tilde{\phi}:\tilde{G_1}\to\tilde{G_2}$, as $\tilde{\phi}(\tilde{B_1})=\tilde{\phi({B_1})}$.
	
	For the converse direction, Let $\tilde{\phi}$ be the isomorphism map between the labelled trees $\tilde{G}_1$ and $\tilde{G}_2$. We can extend $\tilde{\phi}$ to an isomorphism $\phi$ between $G_1$ and $G_2$ as follows. Let $\tilde{B_1}$ be a block node in $\tilde{G}_1$ that is mapped to $\tilde{B}_2=\tilde{\phi}({\tilde{B}_1})$ a block node in $\tilde{G}_2$. Since their labels of $\tilde{B}_1$ and $\tilde{\phi({B_1})}$ are the same, therefore $|B_1|=|B_2|$. And the number of cut vertices incident on $B_1$ and $B_2$ is also the same, because the cut vertex nodes adjacent to $\tilde{B_1}$ are in bijection with cut vertex nodes adjacent to $\tilde{B_2}$. Cut vertex $c_1$ in $G_1$ is mapped to $c_2=\phi(c_1)$ if $\tilde{\phi}(\tilde{c}_1)=\tilde{c}_2$. Consequently, non-cut vertices in $B_1$ are in bijection with non-cut vertices in $B_2$, and therefore $\phi$ can be arbitrarily defined for the non-cut vertices between $B_1$ and $B_2$. This completes the definition of the map $\phi$. 
\end{proof}
\begin{comment}
Given a block graph $G$, we can compute its block-cut tree in
$\Log$. The main point is that each edge of $G$ is contained in a
unique block (biconnected component). Then for any edge, all the
common neighbour vertices of the endpoints of the edge must also
belong to the same block. Since this block is a clique, all the
vertices in it must be adjacent to both the endpoints of the edge
under consideration. Moreover, sizes of the blocks can also be
computed.
\end{comment}
Finally, we have the following for the block graphs. 
\begin{theorem}
	Block graph canonization is in $\Log$.
\end{theorem}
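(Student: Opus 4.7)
The plan is to combine the three ingredients already established in this section: the logspace reduction from a block graph to its labelled block-cut tree (Lemma~\ref{lem:labBlock}), the completeness of this reduction for isomorphism (Lemma~\ref{lem:blockCan}), and the labelled tree canonization algorithm whose logspace bound was derived immediately after Lemma~\ref{lem:labIrr} by reusing the pipeline of Theorem~\ref{thm:Lindell}. Since logspace functions are closed under composition, stringing these together yields the desired bound.

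In more detail, given a block graph $G$ I would first invoke Lemma~\ref{lem:labBlock} to produce, in logspace, the labelled block-cut tree $\tilde{G}$, with each block node carrying the size of the corresponding clique as its label and each cut node carrying the label $1$. Next I would treat $\tilde{G}$ as an instance of rooted labelled tree canonization (choosing an arbitrary canonical root, e.g.\ by taking the polynomial of the unrooted tree as the multiset of polynomials obtained by rooting at every node, or simply by rooting at the node with smallest canonical subtree polynomial); I would then apply the labelled tree canonization algorithm to compute the univariate polynomial canon $C_{\tilde{G}}(x)$, or rather its list of evaluations and associated primes, in logspace. The output of this composition is a string depending only on the isomorphism type of $\tilde{G}$, which by Lemma~\ref{lem:blockCan} depends only on the isomorphism type of $G$, so it is a complete invariant for block graphs.

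If an actual block graph canon (rather than a string canon) is required, I would reconstruct one from the canonical labelled tree by traversing it in the order induced by the canonical labelling: each block node with label $k$ and $c$ cut-vertex children is replaced by a $k$-clique in which the first $c$ vertices are identified with the corresponding (already reconstructed) cut vertices and the remaining $k-c$ vertices receive fresh identifiers assigned in the canonical traversal order. This reconstruction is a standard tree traversal with local gadget expansion and is easily carried out in logspace.

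The main conceptual obstacle has already been cleared by Lemmas~\ref{lem:labBlock} and~\ref{lem:blockCan}, which show that the labelled block-cut tree is both logspace-computable from $G$ and a faithful encoding of its isomorphism type. The only remaining step is the routine invocation of labelled tree canonization and an appeal to logspace composition; there is nothing further that needs delicate analysis.
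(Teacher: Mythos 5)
Your proposal is correct and follows essentially the same route as the paper: compose the logspace construction of the labelled block-cut tree (Lemma~\ref{lem:labBlock}), the faithfulness of that encoding (Lemma~\ref{lem:blockCan}), and the labelled tree canonization pipeline, closing under logspace composition. The extra care you take in fixing a canonical root for the (a priori unrooted) block-cut tree and in optionally re-expanding the tree canon into a graph canon are sensible details that the paper leaves implicit, but the argument is the same.
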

\begin{proof}
	For the block graph $G$ we can compute its labelled block-cut tree in $\Log$ due to Lemma~\ref{lem:labBlock}. Due to Lemma~\ref{lem:blockCan} two block graphs are isomorphic if and only if their labelled block-cut trees are isomorphic. Hence, the canon of the labelled block-cut tree of $G$ itself serves as a canon of $G$ which we can compute in $\Log$ using the logspace algorithm of Theorem~\ref{thm:Lindell}. %The canonical labelling for the graph $G$ can be obtained by extending the canonical labelling of the labelled block-cut tree of $G$ to the vertices of $G$.    
\end{proof}

Block graphs can be alternatively characterized as $1$-clique-sum of
cliques. We have seen that canonizing such graphs is in $\Log$, but a
slightly more general class of graphs, namely $2$-clique-sum of cliques, is
as hard as general Graph Isomorphism under logspace uniform projections.
\begin{remark}
While canonization of $1$-clique-sum of cliques (block graphs) is in
$\Log$, $2$-clique-sum of cliques is as hard as Graph Isomorphism in
general via a logspace uniform projection reduction.
\end{remark}
To see this, given a graph $G$ on $n$ vertices, create a clique of
size $n$ with the same vertex labels as $G$. For every edge
$e=\{a,b\}\in E(G)$ add a vertex $v_e$ and add the edges $\{a,v_e\}$
and $\{b,v_e\}$. This is a $2$-clique-sum of $K_n$ and $|E(G)|$ many
copies of $K_3$. Call this new graph $G^*$. Then any two graphs $G_1$
and $G_2$ are isomorphic if and only if so are $G^*_1$ and $G^*_2$.

\subsection{$k$-trees}   
%Key idea: A canonical tree decomposition, whose centres vertices are
%individualized in all $k'!$ ways to colour the decomposition tree.
%$k$-trees are graphs which are defined inductively as follows.

We recall the definition of $k$-trees by describing the inductive
process of constructing them. Every $k$-clique is a $k$-tree. Given a
$k$-tree $G$ on $n$ vertices, the $(n+1)$-vertex graph obtained by
adding a new vertex $v$ and making it adjacent to every vertex in a
$k$-clique contained in $G$, is also a $k$-tree.
Canonization of $k$-trees in $\Log$ is already known due to~\cite{ADKK12}. In this section we give an alternative proof of the same.
 
For a given $k$-tree $G$, from the set of all $k$ and $(k+1)$ size
cliques contained in $G$, a canonical tree decomposition of $G$ can be
described as follows~\cite{ADKK12}. Let $M$ be the set of
$(k+1)$-cliques in $G$, and $k$-cliques which are adjacent to more than
one $(k+1)$-cliques. Then a tree $\T$ is defined from $M$, which has
nodes corresponding to the cliques in $M$. Any two cliques $M_1$ and
$M_2$ in $M$ are adjacent in $\T$ if and only if exactly one of $M_1$
and $M_2$ is a $k$-clique which is contained in the other one. It can
be seen that $\T$, defined as above, is a tree and gives a tree
decomposition of $G$. $\T$ has a unique centre which could either be
$(k+1)$-clique or a $k$-clique.
\begin{lemma}~{\rm\cite[[Lemmas 3.3, 3.4, 4.1]{ADKK12}}
  The tree $\T$ defined
  from $k$ and $(k+1)$-cliques of the $k$-tree $G$, is a tree
  decomposition of $G$, and $\T$ has a unique centre. $\T$ can be computed in $\Log$.  
\end{lemma}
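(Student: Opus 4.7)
The plan is to prove the three claims in order: that $\T$ is a tree decomposition of $G$, that $\T$ has a unique centre, and that $\T$ is computable in $\Log$.

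For the tree-decomposition claim, I would induct on the recursive construction of the $k$-tree $G$. The base case is a single $(k+1)$-clique: $M$ is a singleton, and $\T$ is a one-node tree whose bag is $V(G)$, trivially a tree decomposition. For the inductive step, suppose $G'$ is obtained from $G$ by attaching a new vertex $v$ to a $k$-clique $K$ of $G$; then the only new $(k+1)$-clique is $K\cup\{v\}$, and the only $k$-clique whose $M$-status can change is $K$ itself. If $K$ was already in $M$, we attach $K\cup\{v\}$ as a new leaf of $\T$ adjacent to $K$; otherwise $K$ was contained in a unique $(k+1)$-clique $M_0$ of $G$, and we extend $\T$ by the path $M_0 \text{---} K \text{---} (K\cup\{v\})$. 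In either case $\T$ remains a tree, and the three tree-decomposition axioms are preserved because $v$ appears only in the newly appended bags, which form a connected subtree.

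For uniqueness of the centre, I would use two structural facts about $\T$: it is bipartite between $k$-clique nodes and $(k+1)$-clique nodes (edges only join nodes of different types), and every $k$-clique node has degree at least two (by the very definition of $M$ as those $k$-cliques contained in more than one $(k+1)$-clique). Consequently every leaf of $\T$ is a $(k+1)$-clique node. Since both endpoints of any diametrical path in a tree are leaves, the diameter of $\T$ is realised between two $(k+1)$-clique nodes; and in a bipartite graph the distance between two same-side vertices is even. Hence the diameter is even, which forces the centre of $\T$ to be a single vertex rather than a central edge.

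For the $\Log$ bound, since $G$ is a $k$-tree on $n$ vertices it has only $O(n)$ many $k$- and $(k+1)$-cliques, each specified by $O(k\log n)$ bits. A logspace machine would iterate over all $(k+1)$-subsets of $V(G)$ and keep those that form cliques, similarly enumerate $k$-subsets, count how many $(k+1)$-cliques each $k$-subset is contained in, and finally output the nodes of $M$ together with the adjacency relation of $\T$ (which simply tests containment of a $k$-set in a $(k+1)$-set). The main obstacle I anticipate is the unique-centre argument: making the parity of distances watertight requires carefully verifying that leaves of $\T$ are always $(k+1)$-clique nodes, which in turn relies on the degree lower bound for $k$-clique nodes being preserved under the construction---this follows directly from the definition of $M$ but is worth stating explicitly.
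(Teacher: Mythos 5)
The paper does not actually prove this lemma; it is stated as a citation to \cite{ADKK12} (Lemmas~3.3, 3.4, 4.1 there), so there is no in-paper proof to compare against. Judged on its own terms, your argument is correct and self-contained, and the route you take for the unique-centre claim is the most interesting part: the observation that $\T$ is bipartite between $k$-clique and $(k+1)$-clique nodes, combined with the fact that every $k$-clique node in $M$ has degree at least two by definition (hence all leaves are $(k+1)$-clique nodes), forces any diametrical path to connect two same-side vertices of the bipartition, so the diameter is even and the centre is a single vertex. That chain of reasoning is tight, and you are right to flag it as the step that needs the explicit leaf characterization.

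Two small points worth making watertight. First, in the tree-decomposition induction you write that the three axioms are preserved ``because $v$ appears only in the newly appended bags''; this settles the connectivity axiom for $v$, but you should also note that for each $u \in K$ the new bags $K$ and $K\cup\{v\}$ attach to the already-connected subtree of bags containing $u$ via $M_0$ (or via $K$ itself in the case where $K$ was already in $M$), so connectivity for those vertices is also preserved. Second, your base case is a single $(k+1)$-clique, but the inductive definition of $k$-trees in the paper starts from a bare $k$-clique, for which there are no $(k+1)$-cliques and $M$ is empty; the lemma is implicitly only asserted once $G$ has at least $k+1$ vertices, which is the reading you adopted, but it is worth saying so explicitly so that Case~2 of your step (``$K$ was contained in a unique $(k+1)$-clique $M_0$'') is never vacuous. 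The $\Log$ argument is fine for fixed $k$, which is the regime the paper works in.
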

Similar to~\cite{ADKK12}, for a vertex $v$ not in the centre of $\T$
(which is a clique), let $M_v$ denote the $(k+1)$-clique node
containing $v$, closest to the centre of $\T$. We say that vertex $v$
is \textit{introduced} in the node $M_v$. Given a labelling of the
vertices in the centre of the tree $\T$, we can obtain a labelled tree
for $G$ which is isomorphism invariant, as long as the labels of the
vertices in the centre are preserved. Then, akin to ~\cite{ADKK12}, we
try out all labellings of the (at most $k+1$) vertices in the centre to
choose one as a canon for the graph $G$.

Before we describe our labelling algorithm, it is useful to have a
$(k+1)$-colour proper colouring of $G$. We root the tree $\T$ at its
centre $R$. Let $|R|=k'\in\{k,k+1\}$. As $R$ is a $k'$-clique, all its
vertices must get $k'$ distinct colours, from $\{1,2,\ldots,k+1\}$, in
any proper colouring of $G$. There are at most $(k+1)!$ many such
proper colourings of $R$. We describe a simple procedure that takes
such a colouring of $R$ and extends it to a unique $k+1$-colour proper
colouring of $G$.  The set of colours is $\{1,2,\ldots,k+1\}$.

\begin{enumerate}
\item We start with a proper colouring of the vertices in the root node
  $R$ of the tree $\T$. We will extend the colouring to the remaining
  nodes of the graph $G$ by following root downwards the decomposition
  tree $\T$ starting with $R$. 
\item In a general step suppose $M$ is a $(k+1)$-clique that is properly
  coloured. Then its children are $k$-cliques contained in $M$. In
  particular, if $S$ is a child of $M$, $|S|=k$ and the vertices
  contained in $S$ get the same colour as they have in $M$.
\item If $M$ is a $k$-clique then its children are $(k+1)$-cliques $S$
  such that $M$ is contained in $S$. The unique vertex in
  $S\setminus M$ gets the unique colour not used in $M$, and the $k$
  vertices in $S\cap M$ inherit the same colour as in $M$.
\end{enumerate}
The above implicitly defines a graph $G_\text{col}$ on $V(G)$, such that two vertices $u$ and $v$ are adjacent if the symmetric difference of two consecutive $k+1$-cliques is $\{u,v\}$. This is clearly acyclic, and all the nodes in the same connected component get the same colour as the colour of the highest vertex in that connected component. Thus the $(k+1)$-colouring can be inferred in $\Log$.
\begin{comment}
Let $k'$ be the size of the clique associated with $R$. We
assume an indexing
($\pi:\{v_{i_1},v_{i_2},\ldots,v_{i_{k'}}\}\to[k']$, $\pi\in S_{k'}$)
of the vertices $v_{i_1},v_{i_2},\ldots,v_{i_{k'}}$ of the $k'$-clique
associated with $R$. , given $\pi$. The vertices in the clique
corresponding to the root node are coloured by their respective
indices as per $\pi$. For an internal node that corresponds to a
$(k+1)$-clique, the colour of the only vertex that does not lie in the
$k$-clique corresponding to its parent, is given the colour unused by
the parent clique vertices.  The following is a functional description
of this $(k+1)$-colouring.
\[
\text{col}(v) = 
\begin{cases}
	\pi(v) & \quad\text{ if } v\in R\\
	k+1 & \quad\text{ if } v\notin R \land \text{parent}(M_v)=R\\
	\text{col}(u) &\quad\text{ otherwise; such that } u\in\text{parent}(\text{parent}(M_v))\setminus\text{parent}(M_v)
\end{cases}
\]
\end{comment}

%The next lemma is easy to see by an inductive argument.

\begin{lemma}
  For each proper colouring of the vertices in the root $R$ of $\T$,
  the vertex colouring defined by the above procedure is a proper
  colouring of $G$ using colours $\{1,2,\ldots,k+1\}$. Moreover, this
  colouring is isomorphism invariant, as long as the indices of the
  vertices in the root clique are preserved. The colouring can be computed in $\Log$.
\end{lemma}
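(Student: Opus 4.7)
The plan is to verify the three assertions — properness, isomorphism invariance, and logspace computability — in that order, leveraging the rooted tree structure of $\T$. For properness, I would argue by induction on the depth of a bag in $\T$. The base case is $R$ itself, properly coloured by hypothesis. For the inductive step, consider a child $M'$ of an already properly coloured bag $M$. If $M$ is a $(k+1)$-clique, then $M'\subset M$ is a $k$-clique and inherits $k$ distinct colours from $M$ directly. If $M$ is a $k$-clique, then $M'\supset M$ is a $(k+1)$-clique whose $k$ shared vertices retain the colours of $M$, and whose unique new vertex receives the sole colour missing from $M$, producing a proper $(k+1)$-colouring of $M'$. Since $\T$ is a tree decomposition of $G$, every edge of $G$ lies in some bag, and the induced vertex colouring of $G$ is therefore proper.

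For isomorphism invariance, I would invoke the fact that the tree decomposition $\T$ is canonically defined from $G$ via its $k$- and $(k+1)$-cliques; in particular, any isomorphism $\phi:G_1\to G_2$ induces an isomorphism $\T_1\to\T_2$ matching centres. Provided $\phi$ restricts to a colour-preserving bijection between the root cliques, the extension procedure is forced at every step: a newly coloured vertex is either inherited from a properly coloured parent bag or assigned the unique colour absent from its parent $k$-clique. An induction on the distance from the root in $\T$ then gives $\text{col}_1(v)=\text{col}_2(\phi(v))$ for every $v\in V(G_1)$, as required.

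For logspace computability I would exploit the forest $G_\text{col}$ described in the paragraph preceding the lemma. By the propagation rule, every connected component of $G_\text{col}$ receives a single colour, namely the colour of the unique vertex of $R$ that it contains. Hence to output $\text{col}(v)$ it suffices to compute $\T$ (which is in $\Log$ by the cited lemma), construct edges of $G_\text{col}$ on the fly from pairs of consecutive $(k+1)$-cliques along a root-path in $\T$, locate the unique vertex of $R$ in the connected component of $v$, and return its colour. Forest connectivity is in $\Log$, so the whole computation fits within a logarithmic work-tape bound.

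The main subtle point I expect is the well-definedness underlying the invariance step: one must confirm that, even though a vertex of $G$ may lie in many bags of $\T$, its colour is dictated by a single parent $k$-clique of the bag $M_v$ in which it is introduced, and so is assigned exactly once and consistently. Once that observation is in hand, the rest of the argument reduces to routine induction over $\T$.
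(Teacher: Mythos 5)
Your proposal follows essentially the same route as the paper's proof: a structural induction on $\T$ showing each bag is properly coloured (inheritance into $k$-clique children, a single new colour in $(k+1)$-clique children), the tree-decomposition property to cover every edge of $G$, invariance from the canonicity of $\T$ and the forced nature of the extension rule, and the acyclic auxiliary graph $G_{\text{col}}$ together with Reingold-style forest connectivity for the $\Log$ bound. One small imprecision worth noting: you claim each connected component of $G_{\text{col}}$ contains a unique vertex of $R$, but when $|R|=k$ the components carrying colour $k+1$ are rooted at the first vertex introduced with that colour rather than at a vertex of $R$; the paper's phrasing (colour of the topmost vertex of the component) avoids this edge case, though the fix is trivial and doesn't affect the approach.
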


\begin{proof}
  That we obtain a $(k+1)$-colouring of $G$ is obvious. We claim that for
  every clique, the vertices in it are properly coloured. We prove this
  by structural induction on the tree, with the base case being the
  root node. This is clearly true for the root node. For any internal
  $k$-clique node, since it is a subset of the vertices in the parent
  $(k+1)$-clique, its vertices continue to have the same colour as they
  have in the parent $(k+1)$-clique. For an internal $(k+1)$-clique
  node $M_v$, the introduced vertex $v$ by definition gets a distinct
  colour that is not taken up by any of the vertices in the parent of
  $M_v$, and hence $M_v$ is properly coloured. Recall that, $\T$ is a
  tree decomposition as well. So, the endpoints of each edge lie in in
  some common node of $\T$ which is properly coloured. Hence $G$ is
  properly coloured.
	
  Since the colouring is uniquely determined by the colouring of $R$,
  the indexing of the vertices of the root clique node, and the tree
  decomposition $\T$ which is isomorphism invariant, it follows that
  the entire vertex colouring is preserved by isomorphisms which
  preserve the colours of vertices in the root node of the
  decomposition tree.
\end{proof}

With this colouring, generated as above by a colouring $col$ of $R$, we
label nodes of $\T$ as follows: the $(k+1)$-clique nodes are labelled
$k+2$, the $k$-cliques are labelled by the colour missing in it, and
the root node is specially marked $r$. Let $\T_{col}$ denote this
labelled tree rooted at $r$.  We give it as input to the labelled tree
canonization routine to obtain a candidate canon. We select the
`smallest' canon obtained from the different ($(k+1)!$ many) 
proper colourings of $R$. This completes the proof of:

\begin{theorem}[\cite{ADKK12}]
	$k$-trees can be canonized in $\Log$.
\end{theorem}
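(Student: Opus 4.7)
The plan is to reduce $k$-tree canonization in constantly many logspace-composable steps to labelled tree canonization, which is already known to be in $\Log$. First I would compute the canonical tree decomposition $\T$ of $G$ rooted at its unique centre $R$ via the preceding lemma. Next I would enumerate each of the at most $(k+1)!$ proper colourings $col$ of the vertices of $R$ with colours from $\{1,\dots,k+1\}$, and for each one extend $col$ to the $(k+1)$-colouring of $G$ defined above, form the labelled tree $\T_{col}$, and feed $\T_{col}$ to the labelled tree canonization routine. The final output would be the lexicographically smallest of the resulting labelled canons.

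The complexity bound is immediate: $\T$ and the $(k+1)$-colouring are each in $\Log$ by the preceding lemmas, the number of root colourings is a constant depending only on $k$, labelled tree canonization is in $\Log$ by the extension of Theorem~\ref{thm:Lindell}, and composition of logspace functions stays in $\Log$.

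To show invariance, I would argue that any isomorphism $\phi\colon G_1\to G_2$ descends to a tree isomorphism $\tilde\phi\colon\T_1\to\T_2$ carrying the unique centre $R_1$ to the unique centre $R_2$. Then every proper colouring $col_1$ of $R_1$ matches the colouring $col_2=col_1\circ\tilde\phi^{-1}$ of $R_2$, and the preceding lemma guarantees that $\phi$ preserves the induced $(k+1)$-colourings, so $\T_{col_1}$ and $\T_{col_2}$ are isomorphic labelled trees. The multisets of $(k+1)!$ labelled-tree canons produced for $G_1$ and $G_2$ therefore coincide, and hence so do their minima.

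The heart of completeness, and the part I expect to be the main obstacle, is showing that the pair $(\T_{col},col)$ determines $G$ up to isomorphism. The idea is to walk down $\T$ from the root, whose colouring is given: each $k$-clique child of a $(k+1)$-clique is obtained by deleting the vertex whose colour equals the missing-colour label of the child, and each $(k+1)$-clique child of a $k$-clique is obtained by introducing a fresh vertex whose colour is the one missing from the parent $k$-clique. This reconstructs every clique on the tree and hence the vertex and edge sets of $G$. Formalising this requires checking that the labelling convention cleanly distinguishes the two kinds of parent-child transitions in $\T$ and that the missing-colour labels together with $col$ uniquely identify the newly introduced vertex at every $(k+1)$-clique node, after which equality of the minimum canons for $G_1$ and $G_2$ yields an isomorphism between them.
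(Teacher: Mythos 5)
Your proposal follows the paper's approach essentially verbatim: compute the canonical rooted tree decomposition $\T$, enumerate the at most $(k+1)!$ proper colourings of the centre, extend each to a $(k+1)$-colouring of $G$, form the labelled tree $\T_{col}$, canonize, and take the minimum. You also spell out the invariance and reconstruction arguments that the paper leaves implicit, and your caveat about needing the labelling to disambiguate the two kinds of parent--child transitions in $\T$ (and hence recover $G$ from $\T_{col}$) is a legitimate detail the paper glosses over -- one should make sure, for instance, that the root's special label also records $|R|\in\{k,k+1\}$ so that $K_k$ and $K_{k+1}$ are not confused; but the core argument is identical to the paper's.
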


\section{Conclusion}
We present an alternative proof to Lindell's tree canonization in $\Log$
that is conceptually simpler and more structured though not elementary,
since we additionally need arithmetic formula evaluation in $\Log$. It is also
easily adaptable to canonization of tree-like classes as demonstrated for
block graphs and $k$-trees. We have deliberately refrained from extending 
the approach to well known graph classes with  canonical tree-decompositions 
like planar graphs and interval graphs in order to retain the simplicity of the
core method. Canonizing such classes through polynomials remains a future goal.

\bibliography{main}
\begin{appendices}
%\begin{center}
%\textbf{\LARGE APPENDIX}
%\end{center}  

\section{Proof of Theorem~\ref{thm:formula}} \label{sec6}
%A simpler proof of Theorem~\ref{thm:formula} follows from~\cite{AGGT16} and~\cite{BC}. The idea is to use the logspace algorithm of~\cite{AGGT16} to first get a balanced formula of logarithmic depth.
\paragraph{Tree balancing}(\cite[Section 3.4]{AGGT16})
We first recall the algorithm from~\cite{AGGT16} that computes a
recursively balanced tree separator for a given tree.

%We want to find a recursively balanced tree separator for an input tree.    

Consider a rooted tree $S$. For any node $v$ of $S$, let $S_v$ denote the subtree of $S$ rooted at $v$. For nodes $v$ and $v'$ such that $v$ is ancestor of $v'$, let $S_{v,v'}$ denote the tree $S_v\setminus S_{v'}$. For $S_v$, let $c$ be a node such that removing it from $S_v$ breaks $S_v$ into connected components, each bounded by $|S_v|/2$ in size. We define $c$ to be the centre of $S_v$ (pick $c$ to be the lexicographically smallest such node if there is a choice). Let the children of $c$ be $c_1,c_2,\ldots c_k$. Then after removal of $c$ from $S_v$, the $k+1$ connected components\footnote{In case $v=c$ there
  are only $k$ subtrees of $c$.} are $S_{c_1},S_{c_2},\ldots,S_{c_k}$ and $S_{v,c}$ each of size bounded by $|S_v|/2$. The \emph{working tree} corresponding to $S_v$, denoted $wt(S_v)$, is rooted at $c$ and is defined inductively to be the tree obtained by making the roots of the working trees $wt(S_{c_1}),wt(S_{c_2}),\ldots,wt(S_{c_k})$ and $wt(S_{v,c})$ as the children of $c$. With the node $c$ in $wt(S_v)$ we associate the tree $S_v$.  

We now define the centre for a subtree of the form $S_{v,v'}$. On the path from $v$ to $v'$, pick the first node $u$ such that $|S_{u,v'}|\le 1/2|S_{v,v'}|$. Such a vertex must exist. Because initially, for $u=v$, we have $|S_{v,v'}|>1/2|S_{v,v'}|$, and finally, for $u=v'$, we have $|S_{v',v'}|=0$. We define the parent $c$ of $u$ to be the centre of $S_{v,v'}$. Let the remaining children of $c$ be $c_1,c_2,\ldots,c_k$. The connected components of $S_{v,v'}$ after removing $c$ are $S_{c_1},S_{c_2},\ldots,S_{c_k},S_{u,v'}$ and $S_{v,c}$. As $u$ is the first vertex on the $v$ to $v'$ such that $|S_{u,v'}|\le 1/2|S_{v,v'}|$, it follows that  $|S_{c,v'}|\ge 1/2|S_{v,v'}|$. Moreover, $|S_{v,c}|+|S_{c,v'}|=|S_{v,v'}|$. Hence, $|S_{v,c}|\leq 1/2|S_{v,v'}|$. Thus, two of the connected components of $S_{v,v'}$, after removal of $c$, namely $S_{v,c}$ and $S_{u,v'}$ are balanced. That is, both $|S_{v,c}|$ and $|S_{c,v'}|$ are bounded by $|S_{v,v'}|/2$. Next, we recursively compute the working trees for the remaining connected components $S_{c_1},S_{c_2},\ldots,S_{c_k}$. Overall, the \emph{working tree} for $S_{v,v'}$ is rooted at $c$ and is inductively defined to be the tree obtained by making the roots of the working trees $wt(S_{c_1}),wt(S_{c_2}),\ldots,wt(S_{c_k}),wt(S_{v,c})$ and $wt(S_{u,v'})$ the children of $c$. With the node $c$ in $wt(S_{v,v'})$ we associate the tree $S_{v,v'}$.

By construction, we have that the depth of $wt(S)$ is $O(\log n)$, because the size of the associated tree of a node of the working tree is at most half the size of the tree associated with its grandparent node.

\begin{remark}
  If $S$ is a binary tree then its working tree $wt(S)$, by the above construction, will be ternary (each node has at at most three children).
\end{remark}  

\begin{lemma}(\cite[Lemma 16]{AGGT16})\label{wt-lem}
For any tree $S$, its working tree, $wt(S)$ as defined above, can be computed in logspace.
\end{lemma}
      
\begin{proof}
We can compute the working tree for a tree $S$ in logspace as follows. First, by a standard logspace traversal of the tree, we compute the size of every subtree of $S$. To find the parent of a given node $d$ in the working tree $wt(S)$, we run the following logspace procedure. Find the centre of $S$, let it be $c$. Removing the centre we get many components, find the one that contains the vertex $d$, call it $S_1$. Apply the same procedure recursively on $S_1$, getting deeper in smaller and smaller components that contain $d$, until $d$ itself becomes the centre of some component. The centre of the previous component containing $d$, in the recursion, is the parent of $d$ in the working tree. In this recursive procedure, we need to keep track of the current component, which requires storing at most two nodes, e.g., $v$ and $v'$ for $S_{v,v'}$. Moreover, we need to store the centre of the last component in the recursion. 
To find the centre of a tree of the form $S_v$, we can try all the vertices of the tree as potential centre and check if they satisfy the size requirements. To find the centre of a tree of the form $S_{v,v'}$, we need only try the vertices on the $v$ to $v'$ path in the tree $S_{v,v'}$ as potential centres.        
\end{proof}

\paragraph{Depth-reduction of arithmetic formulas in logspace} We can assume that the underlying tree of the input arithmetic formula is a binary tree $A_r$ (rooted at the output gate $r$). That is, every gate ($+$ or $\times$) in it has two inputs. The depth-reduction algorithm is a recursive procedure based on
the working tree construction described above (Lemma~\ref{wt-lem}). 

For a gate $g$ in $A_r$ let $A_g$ denote the subformula rooted at $g$. If we
replace the subformula $A_g$ with a new variable $x_g$, making $g$ an
input gate, the resulting formula $A_{r,g}$ is \emph{linear} in the variable $x_g$. Hence, we can write
\[
  A_{r,g}=Bx_g+C,
\]
and note that $A_r=B\cdot A_g+C$. Formula $A_{r,g}$ is obtained by \emph{scarring} $A_r$ at gate $g$. The \emph{scar} at $g$ in the formula $A_r$ sets up the recursive depth-reduction procedure that we now explain.

For an arithmetic formula $\Phi$, let $\hat{\Phi}$ denote the equivalent
depth-reduced formula computed by the procedure. That is, $\Phi\equiv \hat{\Phi}$.

Given $\hat{A}_g$, we recursively compute $\hat{A}_r$ as follows. Let $\hat{A}_{r,g}[x_g=0]$ be the depth-reduced arithmetic formula equivalent to $A_{r,g}$ with $x_g$ set to $0$. The formula $\hat{A}_{r,g}[x_g=1]$ is similarly defined. Notice that $\hat{A}_{r,g}[x_g=0]\equiv C$ and $\hat{A}_{r,g}[x_g=1]\equiv B+C$,
and hence we have
\[
  A_r \equiv ((\hat{A}_{r,g}[x_g=1]-\hat{A}_{r,g}[x_g=0])\times \hat{A}_g)+\hat{A}_{r,g}[x=0].
\]
Now, to obtain $\hat{A}_r$, we will invoke the working tree construction
corresponding to the rooted binary tree of $A_r$. The centres identified
by the working tree construction will be the gates to scar in the corresponding
subformulas.

%We first compute the working tree of $A$, $wt(A)$. For each node $c$ in the %working tree we will substitute a small arithmetic circuit which we describe %below. Let the associated tree with the node $c$ be $A_v$.

For the subformula $A_v$ rooted at a gate $v$, let $c$ be a balanced separator of $A_v$ picked as centre by the working tree algorithm of Lemma~\ref{wt-lem}. We denote by $\circ_c$ the gate corresponding to vertex $c$. Let the children of $c$ in $A_v$ be $c_1$ and $c_2$. Then the constant-size arithmetic gadget defined at the centre $c$ that computes $A_v$ from the recursively obtained depth-reduced formulas $\hat{A}_{v,c},\hat{A}_{c_1},\hat{A}_{c_2}$ is described
below. Notice that this gadget is a circuit (and not a formula) but of
constant size.

\begin{align}
	A_v \equiv &((\hat{A}_{v,c}[x_c=1]-\hat{A}_{v,c}[x_c=0])\times(\hat{A}_{c_1}\circ_c \hat{A}_{c_2}))+ \hat{A}_{v,c}[x_c=0]  
\end{align}

If the tree associated with vertex $c$ is $A_{v,v'}$, then we do the following. By the working tree construction, vertex $c$ lies on the path from $v$ to $v'$ in $A_{v,v'}$. Let $c_1$ and $c_2$ be the children of $c$ in $A_{v,v'}$ such that $v'$ is a descendant of $c_2$. Then the following arithmetic gadget at centre $c$ defines two outputs: $A_{v,v'}[x_{v'}=\theta]$ for $\theta\in\{0,1\}$.
\begin{align}
	A_{v,v'}[x_{v'}=\theta] \equiv & ((\hat{A}_{v,c}[x_c=1]-\hat{A}_{v,c}[x_c=0])\times(\hat{A}_{c_1}\circ_c \hat{A}_{c_2,v'}[x_{v'}=\theta]))\nonumber\\
	& +\hat{A}_{v,c}[x_c=0]
\end{align}

In both cases, the gadget defined at centre $c$ is a circuit of constant size and constant depth. Suppose the inputs to the gadget are already converted into formulas, replacing the gadget at $c$ by an equivalent formula will incur a constant factor blow-up in size at $c$.  As the working tree has $O(\log s)$ depth, where $s$ is the number of gates in $A_r$, replacing the circuit
gadget by an equivalent formula at each centre $c$ will result in an $s^{O(1)}$
size blow-up in the obtained logarithmic depth formula $\hat{A}_r$.

Clearly, because of the local gadget substitution, the depth-reduced formula can be computed in logspace along with the working tree construction. Thus we have the following.

%Here, we have crucially used the commutativity of the field to be able to write %$A_r=Bx_g+c$ (the construction in Borodin et al works in a more general %setting).

%Finally, notice that we can convert any arithmetic circuit of size $s$ and %depth $d$ into an equivalent arithmetic formula of depth $d$ and size $O(s\cdot %2^d)$.

%Applying this to the circuit $\mathcal{C}$, we will obtain an arithmetic %formula of logarithmic depth and size $O(\poly(n)\cdot s)$.

\begin{lemma}\label{lem:balFormula}
	Given an a arithmetic formula $\Phi$ of size $s$ of arbitrary depth, an equivalent arithmetic formula $\hat{\Phi}$ of depth $O(\log s)$ and size $s^{O(1)}$ can be computed in $\Log$.  
\end{lemma}

Having computed the balanced arithmetic formula, we can apply the Ben-Or and Cleve construction \cite{BC} to transform it in logarithmic space into an equivalent constant-width arithmetic branching program, which can then be evaluated in a straightforward manner in logarithmic space. This completes the proof of the Theorem~\ref{thm:formula}.

\end{appendices}
\end{document}